\begin{document}

\date{}

\title{Polynomial kernels collapse the W-hierarchy}

\author{Liang Ding\inst{1*}, Abdul Samad\inst{1}, Xingran Xue\inst{1}, \\Xiuzhen Huang\inst{3}, and Liming Cai\inst{1,2}\thanks{To whom correspondence should be addressed. Email: cai@cs.uga.edu.}}
\institute{{\ \inst{1}Department of Computer Science, \ \inst{2}Institute of Bioinformatics \\
University of Georgia, Athens, GA 30602, USA\\
\email{lding@uga.edu, samad@uga.edu, xrxue@uga.edu, cai@cs.uga.edu}\\
\inst{3} Department of Computer Science\\
Arkansas State University, Jonesboro, AR 72467, USA}\\
\email{xhuang@astate.edu}}

\maketitle

\begin{abstract}
We prove that, for many parameterized problems in the class FPT, the existence of polynomial kernels implies the collapse of the W-hierarchy (i.e., W[P] = FPT). The collapsing results are also extended to assumed exponential kernels for problems in the class FPT. In particular, we establish a close relationship between polynomial (and exponential) kernelizability and the existence of sub-exponential time algorithms for a spectrum of circuit satisfiability problems in FPT. To the best of our knowledge, this is the first work that connects hardness for polynomial kernelizability of FPT problems to parameterized intractability. Our work also offers some new insights into the class FPT.
\end{abstract}


\section{Introduction}
Parameterized complexity theory studies computational behaviors of parameterized problems, for which the computational complexity is measured in the parameter values as well as the input size. The theory effectively characterizes the tractability of parameterized problems with dissociation between the parameter and the polynomial function in their time complexity. Therefore, it provides a viable approach to cope with computational intractability for problems with small parameter values. In particular, the class FPT has been established to capture all parameterized problems that can be solved in time $f(k) n^{O(1)}$, for any recursive function $f$ in the parameter value $k$ \cite{DowneyFellowsBook}. For two decades, a number of important techniques have been introduced to investigate the efficiency behaviors of parameterized problems in class FPT. Among them, kernelization has proved effective for the development of efficient parameterized algorithms \cite{BodlaenderSurvey2009,FellowsSurvey1,FellowsSurvey2,FellowsSurvey3}. 

A kernelization for a parameterized problem is to efficiently preprocess every problem instance whose size is reduced and thus bounded by a function in the parameter, preserving the answer. For example, $k$-{\sc Vertex Cover}, the problem of determining the existence of size $k$ vertex cover in a given graph can be reduced to the same problem on a subgraph (called kernel) of size at most $2k$ \cite{ChenVertexCover}. With this kernel size, the reduced problem becomes thus solvable in $O(2^{2k})$-time by exhaustive search on the kernel. Actually, the equivalence between parameterized-tractability and kernelizability can be proved without difficulty. Nevertheless, it is the kernel size that is of interest as it may fundamentally affects the time complexity needed by kernelization-based algorithms. Recent progresses have been made in pursuing polynomial size kernels for various parameterized problems \cite{LampisVertexCover,ThomasFVS,FominBidimensionality}. Most of the progresses have been generalized to the notion of meta-kernelization \cite{BodlaenderMetaKernel,GanianEtAl2013} that has been developed to extend linear kernelization algorithms to include larger classes of problems, e.g., through syntactic characterizations of problems.

On the other hand, limits for achieving desirable kernel sizes have also been extensively sought. For problems with some known kernel size upper bound, it is of interest to ask how likely it is to improve the upper bound.  In a few cases, such limitations were nicely built upon classical complexity hypotheses. For example, it is proved that a smaller kernel size than $2k$ would give rise to a breakthrough in designing approximation algorithms for the Minimum Vertex Cover problem \cite{ChenBoundKernel}. Recent progresses have particularly been made on problems that have resisted the trials of polynomial kernelizability. It has been shown that a larger number of problems in FPT, including $k$-{\sc Path} and $k$-{\sc Cycle}, do not admit polynomial kernels unless polynomial time hierarchy PH collapses to the third level \cite{BodlaenderNoPolyKernel,ChenLowerBoundKernel}. The work poses the open question whether polynomial kernels for these or other problems in FPT would cause a collapse of the W-hierarchy \cite{BodlaenderNoPolyKernel}. 

In this paper, we prove that, for a large number of problems in FPT, the existence of polynomial kernels or exponential kernels does collapse the W-hierarchy. To the best of our knowledge, this is the first such work to relate kernelization hardness of FPT problems to parameterized intractability. We establish the current work on the notion of {\it extended parameterized problems} introduced in \cite{Caisub-exponential}. There it was shown that subexponential-time algorithms for many circuit satisfiability problems $\pi$ in FPT would allow proofs of parameterized tractability for their extended versions $\pi^{\log n}$, which are W[P]-hard, thus collapsing the W-hierarchy. 
Here, we reveal additional relationships between such problems $\pi$ and their extended versions $\pi^{s(n)}$ (for many functions $s(n)=o(\log n)$) that are in the class FPT. In particular, we demonstrate there are subexponential-time transformations from $\pi$ to $\pi^{s(n)}$ that can reduce the parameter in any scale allowable by the function  $s(n)$. We then show that the existence of polynomial and even exponential kernels for problems $\pi^{s(n)}$ leads to subexponential-time algorithms for problems $\pi$ and thus collapse of the W-hierarchy. By the newly introduced transformation, we also provide a much simpler proof for the previously known conclusion \cite{Abrahamson1995,Chen2006lowerbound} that {\rm W[P]=FPT} implies the failure of the Exponential Time Hypothesis (ETH) \cite{Impagliazzo2001ETH,Impagliazzo2001kSAT}. Our work suggests that the difference between the two hypotheses is essentially the capability difference of problems SAT and 3SAT to admit $2^{o(k)}n^{O(1)}$-time algorithms, where $k$ is the number of boolean variables in the instances of the satisfiability problems.





\section{Preliminaries}\label{preSection}
Parameterized problems are computational problems whose inputs are of multi-variables, one of which is designated as the {\it parameter}. Let $\Sigma$ be the alphabet and $N$ be the natural number set. A parameterized problem is decision problem (or a language) with instances drawn from the product set $\Sigma^* \times N$. In particular, we use pairs $\langle I, k \rangle$ to represent instances of a parameterized problem, where $I \in \Sigma^*, k \in N$. We refer the reader to the book \cite{DowneyFellowsBook} for more technical definitions of parameterized problems and the theory of parameterized tractability.

The role of the parameter $k$ in the statement of a parameterized problem varies. Many such parameterized problems are formulated from optimization problems such that the problem is to answer how $\mathrm{OPT}(I)$, the optimum of solutions to the input instance $I$, is numerically related to the parameter \cite{CaiAndChen1997}. However, more sophisticated relationship may be formulated between a parameter $k$ and the problem instances. For this, we use the following general form for parameterized problem  definition. 

A parameterized problem $\pi$ is defined as the set of instances $\langle I, k \rangle$ that satisfy a predicate ${\cal P}_\pi$ that characterizes the problem $\pi$, i.e.,
\[ \pi = \{ \langle I, k \rangle \in \Sigma^* \times N:\, {\cal P}_\pi(I, k)\}\]
Typically parameterized problems formulated from NP problems can have existential second order predicates to characterize their instances. 

The parameterized problem format can be extended to formulate additional parameterized problems through ``engineering'' parameterizations \cite{Caisub-exponential}. In particular, for unbounded, nondecreasing function $s(n)$, where $n$ is the size of input instances, the following  parameterized problem $\pi^{s(n)}$ can be extended from problem $\pi$:
\[\pi^{s(n)} = \{ \langle I, k \rangle \in \Sigma \times N:\, {\cal P}_{\pi^{s(n)}}(I, k)\} \]

where ${\cal P}_{\pi^{s(n)}}(I, k) = {\cal P}_\pi(I, ks(n))$.  
This is because in problem $\pi^{s(n)}$, the parameter is still $k$ but the role of $k$ in the predicate  ${\cal P}_\pi$ is replaced by $ks(n)$. Extended parameterized problems have played an important role in connecting the existence of sub-exponential time algorithms to the collapse of the W-hierarchy.

\begin{figure}\label{circuitFigure}
\centering
\includegraphics[scale=0.29]{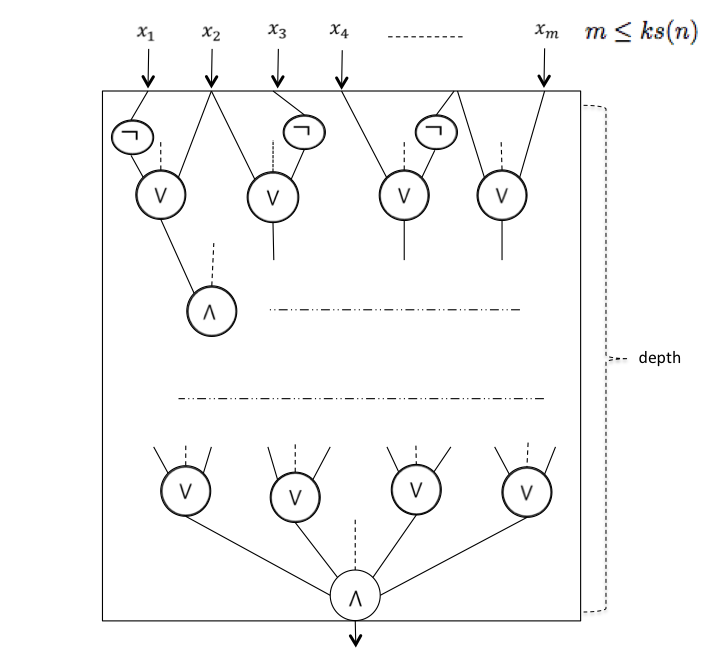}
\caption{Illustration of instance circuits for problem HwSAT$^{s(n)}$ in which there are $m$ input variables, $m\leq ks(n)$. AND ($\wedge$) and OR ($\vee$) gates are arranged in alternative levels, with negation ($\neg$) gates only appear as inputs to the first level gates and the output gate is designated as an AND gate. The problem seeks the answer to the question whether there is a satisfying assignment to the given circuit which sets exactly one-half of its input variables the value TRUE.}
\end{figure}

\begin{proposition}
\mbox{\sc \cite{Caisub-exponential}}\label{proposition_1}
If $\pi$ is solvable in time $O(2^{O(k)}p(n))$ for some polynomial $p$, then for any unbouned nondecreasing function $s(n)=o(\log n)$, $\pi^{s(n)}\in \mathrm{FPT}$.
\end{proposition}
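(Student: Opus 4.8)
The plan is to decide $\pi^{s(n)}$ directly, by unfolding its definition and calling the assumed algorithm for $\pi$ on a rescaled parameter. By construction, an instance $\langle I,k\rangle$ of $\pi^{s(n)}$ of size $n$ satisfies ${\cal P}_{\pi^{s(n)}}(I,k)={\cal P}_\pi(I,ks(n))$, so $\langle I,k\rangle\in\pi^{s(n)}$ if and only if $\langle I,ks(n)\rangle\in\pi$. Thus the algorithm first computes $m=ks(n)$ (possible since $s$ is computable), and then runs the hypothesized $O(2^{O(m)}p(\cdot))$-time decision procedure for $\pi$ on the instance $\langle I,m\rangle$. That instance has size polynomial in $n$ (only the encoding of the integer $m$ is appended), so this step, together with the $\mathrm{poly}(n)$ cost of evaluating $s(n)$ and $m$, runs in time $O\!\left(2^{c\,k\,s(n)}\,q(n)\right)$ for some constant $c$ and polynomial $q$.

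It remains to verify that $2^{c\,k\,s(n)}q(n)$ has the form $f(k)\cdot n^{O(1)}$ for a recursive $f$; this is the only step that uses $s(n)=o(\log n)$. It suffices to show $2^{c\,k\,s(n)}\le f(k)\cdot n$ for all $n\ge 1$ and some recursive $f$, since then the whole running time is at most $f(k)\cdot n\cdot q(n)=f(k)\cdot n^{O(1)}$ and $\pi^{s(n)}\in\mathrm{FPT}$. Fix $k$. For every $n\ge 2$ with $c\,k\,s(n)\le\log_2 n$ we get $2^{c\,k\,s(n)}\le n$ at once. The remaining $n$ satisfy $s(n)/\log_2 n>1/(ck)$; since $s(n)=o(\log n)$ there are only finitely many of them, all at most some threshold $M(k)\ge 1$, and for each $2^{c\,k\,s(n)}\le 2^{c\,k\,s(M(k))}$ because $s$ is nondecreasing. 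Hence $2^{c\,k\,s(n)}\le f(k)\cdot n$ with $f(k)=\max\{1,\,2^{c\,k\,s(M(k))}\}$, which is what we wanted.

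The one point that needs care is that $f$ must be recursive, i.e.\ that the threshold $M(k)$ (or any upper bound for it) is computable from $k$ --- equivalently, that the finite quantity $\max_{n\ge 1}2^{c\,k\,s(n)}/n$ can be computed from $k$. For every natural choice of $s$ (such as $\log\log n$, $\log^* n$, or $\sqrt{\log n}$) the rate at which $s(n)/\log n$ tends to $0$ is itself computable, so this causes no difficulty; I expect this bookkeeping to be the only real obstacle, the reduction and the case analysis being entirely routine. I would also note that the argument isolates the role of the growth rate of $s$: the blow-up $2^{O(ks(n))}$ stays in the form ``$f(k)$ times a fixed polynomial in $n$'' precisely because $s$ is sub-logarithmic, and the estimate degrades exactly at $s(n)=\Theta(\log n)$, which is the regime in which $\pi^{\log n}$ is W[P]-hard.
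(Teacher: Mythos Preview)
The paper does not actually supply a proof of this proposition: it is quoted verbatim as a result from \cite{Caisub-exponential} and left unproved here, so there is no in-paper argument to compare against. Your proof is correct and is essentially the standard one: reduce $\langle I,k\rangle$ for $\pi^{s(n)}$ to $\langle I,ks(n)\rangle$ for $\pi$, run the assumed $2^{O(k)}p(n)$ algorithm, and then use $s(n)=o(\log n)$ to split the resulting bound $2^{O(ks(n))}q(n)$ into an $f(k)$ factor (covering the finitely many small $n$) and an $n^{O(1)}$ factor (once $cks(n)\le\log_2 n$). Your caveat about the recursiveness of $f$ is exactly the right technical footnote; in the original reference this is handled by the blanket assumption that $s$ is computable with an effectively bounded rate of decay of $s(n)/\log n$, and you are right that this is the only non-routine point.
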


The major technical proofs developed in \cite{Caisub-exponential} is centered on a series of circuit satisfiability problems defined on {\it normalized boolean circuits}.  Such circuits arrange gates in levels, alternating AND and OR gates such that gates at level $i$ receive inputs only from gates at level $i-1$ and have an AND gate as their output gate. With such circuits, we define parameter problem 
\[\mathrm{HwSAT} =\{ \langle C, k \rangle : {\cal P}_{\rm {HwSAT}}(C, k)\}\]

where the predicate ${\cal P}_{\mathrm{HwSAT}}(X, y)$ reads as: normalized boolean circuit $X$ of at most $y$ input variables is satisfied with some assignment that sets exactly one half of its input variables to TRUE. The extended version of the problem, for any unbounded, nondecreasing function $s(n)$, is thus:

\[\mbox{HwSAT}^{s(n)} =\{ \langle C, k \rangle : {\cal P}_{\rm HwSAT}(C, ks(n))\}\]
where circuit $C$ has at most $ks(n)$ input variables, where $n=|\langle C, k\rangle|$. Figure 1 illustrates circuit instances for problem {\sc HwSAT}$^{s(n)}$.

\begin{proposition}\mbox{\sc \cite{Caisub-exponential}}\label{proposition_2}
Let $s(n)=\Omega(\log n)$ be a unbounded, nondecreasing function. Then $\mathrm{HwSAT}^{s(n)}$ is $\mathrm{W[P]}$-hard.
\end{proposition}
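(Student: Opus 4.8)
\noindent\emph{Proof strategy.}
The plan is to exhibit a parameterized reduction into $\mathrm{HwSAT}^{s(n)}$ from a standard $\mathrm{W[P]}$-complete problem, for which I would take weighted circuit satisfiability: given a boolean circuit $C'$ on $n'$ inputs and a parameter $k'$, decide whether $C'$ has a satisfying assignment of weight exactly $k'$; this is $\mathrm{W[P]}$-complete~\cite{DowneyFellowsBook}. The starting observation is that such a witness is specified by the list of the $k'$ indices in $[n']$ that are set to TRUE, i.e.\ by only $m := k'\lceil \log n'\rceil$ bits. So from $\langle C', k'\rangle$ I would build a circuit $D$ on $m$ ``address'' variables that reads them as $k'$ blocks of $\lceil \log n'\rceil$ bits, each naming an index; for every input $v_j$ of $C'$ it computes a bit $a_j$ that is TRUE precisely when some block names $j$ (a disjunction over the $k'$ blocks of conjunctions of $\lceil\log n'\rceil$ literals), it conjoins a small gadget forcing the $k'$ blocks to be pairwise distinct and in range, and it evaluates $C'$ on $a_1,\ldots,a_{n'}$. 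With the gadget in place, $D$ is satisfiable iff $C'$ has a satisfying assignment of weight exactly $k'$, and $|D| = \mathrm{poly}(n')$.

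To meet the ``exactly one half TRUE'' requirement of $\mathrm{HwSAT}$, I would pair every address variable $x$ with a fresh shadow variable $x'$ and conjoin to the circuit a gadget enforcing $x' = \neg x$. The resulting circuit $C$ has $2m$ input variables, and \emph{every} satisfying assignment of $C$ selects exactly one member of each pair $\{x, x'\}$ and hence has weight exactly $m$; so $C$ has a satisfying assignment setting exactly half of its $2m$ inputs TRUE iff $D$ is satisfiable iff the $\mathrm{W[P]}$ instance is a yes-instance. I would then normalize $C$: push all negations to the inputs via De~Morgan's laws, coalesce runs of same-type gates so AND and OR alternate level by level, and top the circuit with a dummy AND gate if the output is an OR. Because $\mathrm{W[P]}$ is captured by satisfiability of circuits of \emph{unbounded} depth, no depth restriction is at stake, and normalization increases the size only polynomially and is computable in polynomial time.

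It remains to pick the parameter $k$ for the instance $\langle C, k\rangle$ of $\mathrm{HwSAT}^{s(n)}$, which requires $C$ to have at most $k\,s(n)$ input variables with $n = |\langle C, k\rangle|$, i.e.\ $2m \le k\,s(|\langle C, k\rangle|)$. I take $k$ to be the least natural number for which this holds; it exists and is at most $2m$, since $s$ is unbounded and nondecreasing while $|\langle C,k\rangle|$ grows only by $O(\log k)$ with $k$, so for $k \ge 2m$ the inequality is immediate. Moreover $|C| = \mathrm{poly}(n')$ and $|C| \ge n'$, hence $n' \le n = \mathrm{poly}(n')$, and the hypothesis $s(n) = \Omega(\log n)$ then gives $s(n) \ge c\log n'$ for a fixed constant $c > 0$ and all sufficiently large $n'$; substituting, $k \le 2m/(c\log n') + 1 = O(k')$. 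Thus $k$ is bounded by a computable (indeed linear) function of $k'$, the whole construction runs in time polynomial in $n'$, and so this is a legitimate parameterized reduction, proving $\mathrm{W[P]}$-hardness of $\mathrm{HwSAT}^{s(n)}$; the finitely many small instances below the threshold of the $\Omega$-bound are handled by direct evaluation.

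The step I expect to be the real obstacle is precisely this last calibration: the parameter $k$, the size $n = |\langle C, k\rangle|$ at which $s$ is evaluated, and the constraint $2m \le k\,s(n)$ are mutually dependent, and the reduction goes through only because $s(n) = \Omega(\log n)$ keeps $k$ within a function of $k'$. A slower-growing $s$ would force $k$ of order $k'\log n'/s(n)$, which is unbounded in $k'$ and breaks the reduction --- consistent with $\mathrm{HwSAT}^{s(n)}\in\mathrm{FPT}$ for $s(n)=o(\log n)$ from Proposition~\ref{proposition_1}. The remaining points are routine: verifying that the shadow-variable gadget does not constrain the address bits seen by $D$, and that normalization preserves both satisfiability and the input-variable count.
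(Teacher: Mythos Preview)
The paper does not prove this proposition; it is quoted from \cite{Caisub-exponential} without argument, so there is no in-paper proof to compare against. Your reduction is the standard one underlying that citation: encode a weight-$k'$ witness for an arbitrary circuit by $k'\lceil\log n'\rceil$ address bits, add complementary shadow variables to force the half-weight condition, normalize, and use $s(n)=\Omega(\log n)$ to keep the new parameter $k=O(k')$. The argument is correct, and your discussion of the circular dependence between $k$ and $n=|\langle C,k\rangle|$ (resolved because $n$ grows only by $O(\log k)$ while $k\,s(n)$ grows at least linearly in $k$) is exactly the point that makes the $\Omega(\log n)$ hypothesis necessary.
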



\begin{proposition}\mbox{\sc \cite{Caisub-exponential}}\label{proposition_3}
For any nondecreasing function $s(n)=o(\log n)$, if problem {\rm HwSAT} can be solved in $O(2^{o(k)} n^{O(1)})$-time, then $\mathrm{HwSAT}^{\log n}$ is in $\mathrm{FPT}$ and thus $\mathrm{W[P]=FPT}$.
\end{proposition}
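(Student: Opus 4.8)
The plan is to decide an instance of $\mathrm{HwSAT}^{\log n}$ by re-reading it as an instance of $\mathrm{HwSAT}$ and calling the hypothesized subexponential-time algorithm. Let $\langle C,k\rangle$ be an instance of $\mathrm{HwSAT}^{\log n}$ with $n=|\langle C,k\rangle|$. If the normalized circuit $C$ has more than $k\log n$ input variables we reject immediately (a trivial syntactic check), so assume $C$ has $m\le k\log n$ input variables. By the definition of the extended problem, $\langle C,k\rangle\in\mathrm{HwSAT}^{\log n}$ iff ${\cal P}_{\mathrm{HwSAT}}(C,k\log n)$ holds, i.e.\ iff $\langle C,k\log n\rangle\in\mathrm{HwSAT}$, and $\langle C,k\log n\rangle$ has size $O(n)$. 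So it suffices to run the assumed $\mathrm{HwSAT}$ algorithm on $\langle C,k\log n\rangle$. Writing that algorithm's running time as $2^{g(\kappa)}N^{d}$ on an instance of size $N$ whose variable bound is $\kappa$, with $g(\kappa)=o(\kappa)$ (we may take $g$ nondecreasing) and $d$ a fixed constant, this call costs $O\!\bigl(2^{g(k\log n)}\,n^{d}\bigr)$.

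The remaining step, which is really the only thing to check, is that this time bound is fixed-parameter tractable in $k$, even though we enlarged the variable bound of $\mathrm{HwSAT}$ from the native $\kappa$ up to $k\log n$. I would argue via a win--win split on $n$. Since $g(\kappa)=o(\kappa)$, for each fixed $k$ we have $g(k\log n)/\log n=k\cdot g(k\log n)/(k\log n)\to 0$ as $n\to\infty$; hence there is a threshold $\tau(k)$ with $g(k\log n)\le\log n$, equivalently $2^{g(k\log n)}\le n$, for every $n\ge\tau(k)$. For such instances the running time is $O(n^{d+1})$, a polynomial whose degree does not depend on $k$. For the finitely many instances with $n<\tau(k)$ we instead solve $\mathrm{HwSAT}$ directly, by exhaustive search over the at most $k\log\tau(k)$ input variables of $C$, in time $2^{O(k\log\tau(k))}=f(k)$ (or just bound the assumed algorithm's cost there by $2^{g(k\log\tau(k))}\tau(k)^{d}=f(k)$, using monotonicity of $g$). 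Combining the two cases gives an $f(k)+n^{O(1)}$ algorithm, so $\mathrm{HwSAT}^{\log n}\in\mathrm{FPT}$.

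Finally, $\log n=\Omega(\log n)$, so Proposition~\ref{proposition_2} says $\mathrm{HwSAT}^{\log n}$ is $\mathrm{W[P]}$-hard; since $\mathrm{FPT}$ is closed under fpt many-one reductions, $\mathrm{W[P]}\subseteq\mathrm{FPT}$, and together with the trivial inclusion $\mathrm{FPT}\subseteq\mathrm{W[P]}$ this yields $\mathrm{W[P]}=\mathrm{FPT}$. The one genuinely delicate point in the argument is the running-time analysis just sketched: one must confirm that replacing the variable bound of $\mathrm{HwSAT}$ by the much larger quantity $k\log n$ still preserves fixed-parameter tractability. This is exactly where the hypothesis $2^{o(k)}$ (as opposed to merely $2^{O(k)}$) is used: under a bare $2^{O(k)}$ bound the same reasoning would recover only Proposition~\ref{proposition_1}, i.e.\ $\mathrm{HwSAT}^{s(n)}\in\mathrm{FPT}$ for $s(n)=o(\log n)$, whereas subexponentiality in the number of variables is precisely what forces $2^{g(k\log n)}$ below $n$ for all but finitely many $n$ at each fixed $k$, allowing the construction to reach all the way to $s(n)=\log n$.
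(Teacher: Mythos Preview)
The paper does not give its own proof of Proposition~\ref{proposition_3}: it is quoted as a result of Cai and Juedes~\cite{Caisub-exponential}, so there is no in-paper argument to compare against. Your proof is correct and is essentially the standard argument from that reference: reinterpret an instance $\langle C,k\rangle$ of $\mathrm{HwSAT}^{\log n}$ as the $\mathrm{HwSAT}$ instance $\langle C,k\log n\rangle$, invoke the assumed $2^{g(\kappa)}N^{O(1)}$ algorithm with $\kappa=k\log n$, and use $g(\kappa)=o(\kappa)$ to absorb $2^{g(k\log n)}$ into a polynomial in $n$ for all sufficiently large $n$ (depending on $k$), handling small $n$ by brute force. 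Your win--win split via the threshold $\tau(k)$ is exactly the right way to make this precise, and your closing remark correctly isolates where the subexponential hypothesis is essential. The quantifier ``for any nondecreasing function $s(n)=o(\log n)$'' in the proposition's statement plays no role in the conclusion and appears to be a vestigial phrase; you were right to ignore it.
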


The work in \cite{Caisub-exponential} also defined, for every integer $i\geq 1$,  satisfiability problems on normalized circuits of depth $i$ 
\[ \mbox{HwSAT}[i] = \{ \langle C, k \rangle : {\cal P}_{\rm {HwSAT}[i]}(C, k)\}\]
and their extended versions HwSAT$[i]^{s(n)}$ for any unbounded, nondecreasing function $s(n)$. For every $i\geq 1$, analogs of Propositions \ref{proposition_1}-\ref{proposition_3} were proved for HwSAT$[i]^{s(n)}$ with respect to W$[i]$, the $i$-th level of the W-hierarchy.

In this paper, we will focus our discussions on problems HwSAT$^{s(n)}$, for many nondecreasing functions $s(n)=o(\log n)$,  and show that they do not admit polynomial even exponential kernels by relating such kernelization to sub-exponential time algorithms for HwSAT. Our techniques also allow us to achieve analogous results for HwSAT$[i]$ and HwSAT$[i]^{s(n)}$, $i\geq 1$, whose proofs will be omitted due to the space limitation.


\vspace{1mm}
A kernelization of a parameterized problem $\pi$ is a polynomial time (in both $n$ and $k$) reduction that reduces any instance $\langle I, k\rangle$ of $\pi$ to an  instance $\langle I', k'\rangle$ of the same problem such that
\begin{enumerate}
\item ${\cal P}_\pi(I, k)$ if and only if ${\cal P}_\pi(I', k')$; and
\item $|I'| \leq f(k)$, and $k' \leq f(k)$. 
\end{enumerate}
where $f$ is a computable function and $f(k)$ is called the {\it kernel size}. Problem $\pi$ admits a {\it polynomial} (resp. {\it exponential}) {\it kernel} if the function $f$ in the above definition of kernelization is a polynomial (resp. exponential) function. Recently it has been shown that polynomial kernels may not exist for many parameterized problems under classical complexity hypotheses \cite{BodlaenderNoPolyKernel,ChenLowerBoundKernel}. Connections between the existence of such kernels to the structural properties of parameterized complexity, in particular the W-hierarchy, has also been sought \cite{BodlaenderNoPolyKernel}. In the rest of this paper, we demonstrate such a connection. We will prove that parameterized problems HwSAT$^{s(n)}$ in FPT, for many functions $s(n) = o(\log n)$, do not have polynomial kernel, unless problem HwSAT can be solved by subexponential time algorithms and the W-hierarchy collapses.


\section{Subexponential-Time Transformations}\label{tranSection}
 In this section, we first introduce a novel technique that is critical to achieving the results in this paper. The technique transforms problems HwSAT to HwSAT$^{s(n)}$ such that the  parameter values can be reduced by any 
 scale permitted by the function $s(n)$.

In particular, to transform an instance $\langle C, k\rangle$ for HwSAT$^{s(n)}$, an instance $\langle C_1, k_1\rangle$ for HwSAT$^{s(n)}$, the basic idea is to rearrange the $k$ input variables of circuit $C$ into no more than $k_1 \times s(n_1)$ input variables in the constructed circuit $C_1$ (where $n_1 =|\langle C_1, k_1\rangle|$), while preserving the circuit logic of $C$ in $C_1$. The new parameter $k_1$ can be chosen as desired.

\begin{lemma}\label{subTransLemma}
Let $s(n) = \log n / t(n)$ be a function for some non-decreasing unbounded function $t(n)=\omega(1)$.  
Then there is a reduction $f$ such that, for every instance $\langle C, k\rangle$,
\begin{enumerate}
\item $f(\langle C, k\rangle) = \langle C_1, k_1\rangle$, and
\item ${\cal P}_{\rm HwSAT} (C, k) \Longleftrightarrow  {\cal P}_{\rm HwSAT}(C_1, k_1s(n_1))$, and 
\item $f$ is computable in time $2^{o(k)} n^{O(1)}$,
\end{enumerate}
where $n=|\langle C, k\rangle|$, $n_1=|\langle C_1, k_1\rangle|$, and $k_1 =l(k)$, for some function $l(k)\geq r(k)t(2^k)$ and some function $r(k)=\omega(1)$.
\end{lemma}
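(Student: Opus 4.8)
The plan is a padding argument: the reduction leaves the circuit essentially untouched and only inflates the \emph{size} of the output instance, so that the much larger budget $k_1 s(n_1)$ on the number of input variables still accommodates the at most $k$ input variables of $C$. First I would dispose of the degenerate cases in polynomial time: if $C$ is not a normalized circuit, or if it has more than $k$ input variables, then ${\cal P}_{\rm HwSAT}(C,k)$ is false, so $f$ outputs a fixed no-instance; otherwise let $m\le k$ be the number of input variables of $C$.

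Next I would fix the auxiliary functions. Put $r(k):=\lceil\log(k+2)\rceil$ (so $r(k)=\omega(1)$ and $r(k)\ge 1$), $l(k):=\lceil r(k)\,t(2^k)\rceil$, and set the new parameter to $k_1:=l(k)$; assuming $t$ is computable, these take time $\mathrm{poly}(k)$. Let $C_1$ be $C$ with its encoding padded by a filler symbol ignored by the decoder until $n_1:=|\langle C_1,k_1\rangle|=\max\{\,n,\ 2^{\lceil k/r(k)\rceil}\,\}$ --- equivalently, one may append to $C$ a trivially-true dummy subcircuit over the existing variables that respects the alternating structure. Either way $C_1$ computes the same function as $C$ and has the same set of input variables, so $m$ is unchanged, giving item~1. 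Since $r(k)=\omega(1)$ we have $2^{\lceil k/r(k)\rceil}=2^{o(k)}$, hence $n_1=2^{o(k)}n^{O(1)}$ and the whole construction runs in time $O(n_1)+\mathrm{poly}(n)=2^{o(k)}n^{O(1)}$, giving item~3.

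The substance is item~2, which amounts to verifying the single numerical inequality $m\le k_1 s(n_1)$: once that holds, $C_1$ has at most $k_1 s(n_1)$ input variables, and then ${\cal P}_{\rm HwSAT}(C_1,k_1 s(n_1))$ asks exactly the same question as ${\cal P}_{\rm HwSAT}(C,k)$ --- a normalized circuit with at most that many input variables, satisfiable by an assignment setting exactly half of them TRUE --- so the two predicates agree because $C_1$ and $C$ compute the same function. To prove the inequality I would use monotonicity: $s$ is nondecreasing (we may assume this, as it is required for $\mathrm{HwSAT}^{s(n)}$ to be well-defined in the sense of Propositions~\ref{proposition_1}--\ref{proposition_3}) and $n_1\ge 2^{\lceil k/r(k)\rceil}$, while $t$ is nondecreasing with $\lceil k/r(k)\rceil\le k$, so
\[
s(n_1)\ \ge\ s\!\left(2^{\lceil k/r(k)\rceil}\right)\ =\ \frac{\lceil k/r(k)\rceil}{t\!\left(2^{\lceil k/r(k)\rceil}\right)}\ \ge\ \frac{k/r(k)}{t(2^{k})},
\]
and hence $k_1 s(n_1)\ \ge\ l(k)\cdot\frac{k/r(k)}{t(2^k)}\ \ge\ r(k)\,t(2^k)\cdot\frac{k/r(k)}{t(2^k)}\ =\ k\ \ge\ m$, as required.

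The main obstacle --- and the reason the statement is restricted to $s(n)=\log n/t(n)$ with $t=\omega(1)$ --- is meeting this inequality while keeping $n_1$ only \emph{subexponential} in $k$: the padding needs $s(n_1)$ of order $k/k_1\approx k/(r(k)t(2^k))$, which for $s(n)=\log n/t(n)$ forces $\log n_1\gtrsim k/r(k)$, i.e.\ $n_1\approx 2^{k/r(k)}$, and this is $2^{o(k)}$ precisely because $r(k)$ is permitted to grow (however slowly). A target parameter shrinking faster, or an $s$ that reduced $\log n$ by more than a polynomial factor, would push $n_1$ to genuinely exponential size and break item~3. The remaining points --- that padding preserves normalization and that the auxiliary functions stay within the time budget --- are routine.
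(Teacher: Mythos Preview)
Your proof is correct and follows essentially the same padding approach as the paper: inflate the instance to size roughly $2^{k/r(k)}$, keep the circuit logic intact, and verify the single inequality $k_1 s(n_1)\ge k$. The only noteworthy technical difference is how that inequality is obtained: the paper first disposes of the case $n>2^k$ by brute force, so that in the remaining case $n_1\le 2^k$ and hence $t(n_1)\le t(2^k)$ by monotonicity of $t$ alone; you instead invoke monotonicity of $s$ itself (justified by the standing hypotheses on $s$ in Propositions~\ref{proposition_1}--\ref{proposition_3}) and thereby avoid the case split. Also, the paper leaves $r(k)$ an arbitrary $\omega(1)$ function so that later applications (Table~\ref{transLemmaTable}) can tune $l(k)$ freely, whereas you fix $r(k)=\lceil\log(k+2)\rceil$; your argument goes through verbatim for any $r(k)=\omega(1)$, so this is harmless, but it is worth stating the lemma at that level of generality.
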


\begin{proof}
We demonstrate how instance $\langle C_1, k_1\rangle$ for HwSAT$^{s(n)}$ is constructed from given instance $\langle C, k\rangle$ for HwSAT. We will only consider the case $2^{k} \geq n = |\langle C, k\rangle|$, because for the other case $2^{k} \leq n$ we can directly answer for the instance $\langle C, k\rangle$ with an exhaustive search that runs in time $2^{O(k)} n^{O(1)} = n^{O(1)}$, a polynomial time.

So we assume that $2^{k} \geq n$ in the rest of the proof. 
The circuit  $C_1$ is defined as
\[ C_1 = \left\{ 
  \begin{array}{l l}
    C & \quad \mbox{ if } n \geq  n_0\\
    C \#0^{n_0 - n} & \quad \mbox{ if } n < n_0 \\
  \end{array} \right.
\]
where $n_0$ is chosen to be at least $2^{k/r(k)}$, for some nondecreasing $r(k)=\omega(1)$.
It is clear that the circuit logic of $C_1$ remains the same as $C$.
\begin{claim}
Let $k_1 = l(k)$ be chosen as $r(k)t(2^k)$. Then the constructed circuit $C_1$ contains at most $k_1 s(n_1)$ input variables, where $n_1 = |\langle C_1, k_1\rangle|$.
\end{claim}
\begin{proof} 
It suffices to show that
\[k_1 s(n_1) \geq k \, \, \mbox {or simply } \, s(n_1) \geq \frac{k}{l(k)} \]

Because the size of $C_1$ is at least $n_0 = 2^{k/r(k)}$, we have $n_1 \geq 2^{k/r(k)} $. So
\begin{eqnarray}
s(n_1) = \frac{\log n_1}{t(n_1)} \geq \frac{\log 2^{k/r(k)}}{t(n_1)} = \frac{k}{r(k)t(n_1)}
\end{eqnarray}\label{lemmaEqu1}

On the other hand, since $C_1$ is either $C$ or $C$ padded with dummy gates up to the size $n_0$, $|C_1| \leq \max\{ |C|, n_0\}$. By the definition of $k_1$, $k_1 \leq k$. Also because of $n \leq 2^k$, we have $n_1 = |\langle C_1, k_1\rangle| \leq \max\{n, n_0\} \leq 2^k$. Replacing it in the above inequality,

\begin{eqnarray}
s(n_1)&\geq &\frac{k}{r(k)t(n_1)}\geq \frac{k}{r(k) t(2^k)} \nonumber\\
&=& \frac{k}{l(k)}\cdot \frac{l(k)}{r(k) t(2^k)} \geq \frac{k}{l(k)}.\label{lemmaEqu2}
\end{eqnarray}
The last inequality in (\ref{lemmaEqu2}) is ensured by the chosen  
$l(k) \geq r(k) t(2^k)$.
(End of proof for the claim).
\end{proof} 


It is clear that the constructed circuit $C_1$ has $k$ input variables which is bounded by $k_1  s(n_1)$. Moreover, the transformation function $f$ satisfies  
\[ {\cal P}_{\rm HwSAT}(C, k) \Longleftrightarrow {\cal P}_{\rm HwSAT} (C_1, k_1s(n_1)) \] 
as the circuit logic of $C$ remains the same in  $C_1$. Finally,  
the transformation (including the padding) can be done in the subexponential time $2^{k/r(k)} n^{O(1)}$. \qed
\end{proof}

\newpage

Note that the Lemma \ref{subTransLemma} holds without specifying the function $r(n)$. To use the subexponential-time transformation to produce instances $\langle C_1, k_1\rangle$ with scaled down parameter $k_1$ for the interest of this paper, we tailor the lemma with a number of concrete functions $s(n)$ to specify $r(k)$ and $l(k)$. In general, for each function $t(n)=o(h(n))$, we can always choose some $l(k)$ in the class $o(h(2^k))$. This is because we can set $r(k)$ to be such that $t(n)=h(n)/r(n)^2$, where $r(n) = \omega(1)$.

Table \ref{transLemmaTable} lists three categories of functions $s(n)$ and the classes from which the corresponding function $l(k)$ belongs to. The most restricted category contain functions $s(n) = \Omega(\log^{\epsilon} n)$ for any given constant $\epsilon>0$, for which $t(n) = o(\log^{1-\epsilon} n)$;  the transformation can reduce the parameter from $k$ to some $l(k)=o(k^{\frac{1}{d}})$, for any $d < 1/(1-\epsilon)$. The second category contains functions $s(n)$ for which $t(n) = o(\log^{\epsilon} n)$ for all $\epsilon>0$, allowing $l(k)=o(k^{\frac{1}{d}})$, for any $d\geq 1$ independent of $\epsilon$. The third category considers functions $s(n)=\log n/o(\log \log n)$  which allows further reduction of the parameter to $l(k)=o(\log k)$ and thus $O(\log k)$.


\begin{table}\label{transLemmaTable}
\begin{center}
    \begin{tabular}{| c | c | c | c |}
    \hline
     Category & $t(n)$ & corresponding class for $l(k)$ \\ \hline\hline 
    1  &  $o(\log^{1-\epsilon} n)$  for any given $\epsilon>0$ &  $o(k^{\frac{1}{d}})$ for $d<1/(1-\epsilon)$\\ \hline

     2 & $o(\log^\epsilon n)$ for all $\epsilon >0$ & $o(k^{\frac{1}{d}})$ for any given $d\geq 1$\\ \hline
     3 & $o({\log \log n})$ & $O(\log k)$ \\ \hline
     
       \hline
    \end{tabular}
\end{center}
\caption{Three categories of functions $s(n)$ for which Lemma \ref{subTransLemma} holds and corresponding classes from which $l(k)$ can be picked to be of interest to this paper. $s(n) = \log n/t(n)$.} 
\end{table}

\section{Polynomial Kernels Collapse the W-Hierachy}\label{polySection}
In this section, we will show that polynomial kernels for HwSAT$^{s(n)}$, for many functions $s(n)=o(\log n)$, would allow subexponential-time algorithms to solve HwSAT. According to \cite{Caisub-exponential}, such an algorithm would collapse the W-hierarchy. We present details in the following.

\begin{lemma}\label{polyRuntimeLemma}
Let $d\geq 1$ and $s(n)=o(\log n)$ be an bounded, non-decreasing function. If $\mathrm{HwSAT}^{s(n)}$ admits polynomial kernels of size bounded by $k^d$, then it is solvable in time $2^{O(k^d)}k^d + n^{O(1)}$, where $k$ is the parameter of problem $\mathrm{HwSAT}^{s(n)}$.
\end{lemma}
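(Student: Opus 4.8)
The plan is to show that a polynomial kernel forces the reduced instance to have very few input variables, after which a brute-force search over assignments finishes the job. Given an instance $\langle C, k\rangle$ of $\mathrm{HwSAT}^{s(n)}$ with $n=|\langle C,k\rangle|$, I would first invoke the assumed kernelization: in time $n^{O(1)}$ it produces an instance $\langle C', k'\rangle$ of the same problem with $|C'|\le k^d$ and $k'\le k^d$, and with ${\cal P}_{\mathrm{HwSAT}^{s(n)}}(C,k)\Longleftrightarrow{\cal P}_{\mathrm{HwSAT}^{s(n)}}(C',k')$; unfolding the definition of the extended predicate, this says ${\cal P}_{\rm HwSAT}(C, ks(n))\Longleftrightarrow{\cal P}_{\rm HwSAT}(C', k's(n'))$, where $n'=|\langle C', k'\rangle|$.

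The key observation is that the input variables of $C'$ are part of its encoding, so the number of input variables of $C'$ is at most $|C'|\le k^d$, irrespective of how large $s(n')$ might be. Hence the kernel can be decided directly by exhaustive search: enumerate all assignments to the at most $k^d$ input variables of $C'$ — there are at most $2^{k^d}$ of them, or $\binom{m'}{m'/2}\le 2^{k^d}$ if one restricts to balanced assignments — and for each assignment test, in time polynomial in $|C'|\le k^d$, whether it sets exactly one half of the input variables to TRUE and satisfies $C'$. Absorbing the polynomial evaluation cost into the exponential, this brute force runs in time $2^{O(k^d)}k^d$.

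Combining the two phases, the total running time is $n^{O(1)}$ for the kernelization plus $2^{O(k^d)}k^d$ for the exhaustive search on the kernel, that is, $2^{O(k^d)}k^d + n^{O(1)}$, as claimed.

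I do not expect a genuine obstacle here; the only point that needs care is the observation that a kernel of size $k^d$ bounds not merely the bit-length of $C'$ but in particular the number of its input variables, so that the search ranges over a set of size at most $2^{k^d}$ rather than over something governed by the (possibly much larger) quantity $k's(n')$. Everything else is routine.
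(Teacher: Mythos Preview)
Your proposal is correct and follows essentially the same argument as the paper: apply the assumed kernelization to obtain $\langle C',k'\rangle$ with $|C'|\le k^d$, observe that the number of input variables of $C'$ is therefore at most $k^d$, and brute-force over all (half-weight) assignments in time $2^{O(k^d)}k^d$, adding the $n^{O(1)}$ cost of the kernelization. Your write-up is in fact slightly more explicit than the paper's about why the bound on $|C'|$ controls the number of input variables rather than the quantity $k's(n')$.
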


\begin{proof}
Each input to problem $\mathrm{HwSAT}^{s(n)}$ is of the form $\langle C, k\rangle$, in which $C$ is a normalized boolean circuit with $ks(n)$ input variables, where $n=|\langle C, k \rangle|$. 
The assumed polynomial kernel reduces instance $\langle C, k\rangle$ to $\langle C', k'\rangle$, for which $|C'| \leq k^d$ and $k' \leq k^d$, such that
\[{\cal P}_{\rm HwSAT} (C, ks(n)) \Longleftrightarrow  {\cal P}_{\rm HwSAT}(C', k's(n'))\]
where $n' = |\langle C', k'\rangle|$. Because $C'$ is of a size bounded by $k^d$, the number of input variable in $C'$ cannot be more than this number. So the answer to the question on instance $\langle C', k'\rangle$ can be found by enumerating all half-weight assignments to the $ k^d$ variables in time $2^{O(k^d)} k^d$. Together with the time used by the kernelization, the total time needed to solve problem $\mathrm{HwSAT}^{s(n)}$ is $2^{O(k^d)}k^d + n^{O(1)}$. \qed
\end{proof}

\begin{theorem}\label{polyCollapseTheorem}
Let $d\geq 1$ be any given integer and $s(n) = \log n / t(n)$ be a function for some non-decreasing unbounded function $t(n)=\omega(1)$ and $t(n) = o(\log^{\frac{1}{d}} n)$. Then unless $\mathrm{W[P]=FPT}$, problem $\mathrm{HwSAT}^{s(n)}$ cannot admit polynomial kernels of size $k^d$.
\end{theorem}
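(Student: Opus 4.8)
The plan is to turn the assumed polynomial kernel into a subexponential-time algorithm for $\mathrm{HwSAT}$ and then invoke Proposition~\ref{proposition_3}. First I would assume, for contradiction, that $\mathrm{HwSAT}^{s(n)}$ admits polynomial kernels of size $k^d$. Lemma~\ref{polyRuntimeLemma} then already gives an algorithm solving $\mathrm{HwSAT}^{s(n)}$ in time $2^{O(k^d)}k^d+n^{O(1)}$, so the only missing ingredient is a way to feed instances of $\mathrm{HwSAT}$ into this algorithm with a parameter small enough that $2^{O(k_1^{d})}$ becomes $2^{o(k)}$.

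That ingredient is Lemma~\ref{subTransLemma}, and the second step is to instantiate it with the right growth-rate bookkeeping. Since $t(n)=o(\log^{1/d}n)$, I would write $t(n)=\log^{1/d}n / r(n)^2$ for a suitable nondecreasing unbounded $r(n)=\omega(1)$ (the device described right after Lemma~\ref{subTransLemma}), and set $l(k)=r(k)\,t(2^k)$. Because $r$ is nondecreasing and $2^k\ge k$, we get $l(k)=r(k)\,k^{1/d}/r(2^k)^2\le k^{1/d}/r(k)$, hence $l(k)=o(k^{1/d})$, while at the same time $l(k)\ge r(k)\,t(2^k)$ as the lemma demands. Lemma~\ref{subTransLemma} then yields a reduction $f$ running in time $2^{o(k)}n^{O(1)}$ that maps any instance $\langle C,k\rangle$ of $\mathrm{HwSAT}$ to an instance $\langle C_1,k_1\rangle$ with $k_1=l(k)=o(k^{1/d})$ and ${\cal P}_{\rm HwSAT}(C,k)\Longleftrightarrow{\cal P}_{\rm HwSAT}(C_1,k_1 s(n_1))$; by the definition of the extended problem, the right-hand side says exactly $\langle C_1,k_1\rangle\in\mathrm{HwSAT}^{s(n)}$.

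Third, I would compose the two pieces: apply $f$ to $\langle C,k\rangle$, then run the kernel-based algorithm of Lemma~\ref{polyRuntimeLemma} on $\langle C_1,k_1\rangle$. Its running time is $2^{O(k_1^{d})}k_1^{d}+n_1^{O(1)}$; since $k_1=o(k^{1/d})$ we have $k_1^{d}=o(k)$, so $2^{O(k_1^{d})}=2^{o(k)}$, and since $f$ runs in subexponential time, $n_1\le 2^{o(k)}n^{O(1)}$, so $n_1^{O(1)}=2^{o(k)}n^{O(1)}$ as well. Adding the cost of $f$, $\mathrm{HwSAT}$ is solved in time $2^{o(k)}n^{O(1)}$. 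By Proposition~\ref{proposition_3} this forces $\mathrm{HwSAT}^{\log n}\in\mathrm{FPT}$ and hence $\mathrm{W[P]=FPT}$, contradicting the hypothesis; therefore $\mathrm{HwSAT}^{s(n)}$ cannot admit polynomial kernels of size $k^d$ unless $\mathrm{W[P]=FPT}$.

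I expect the main obstacle to be the calibration in the second step: choosing $r$ and $l$ so that $l(k)=o(k^{1/d})$ yet $l(k)\ge r(k)t(2^k)$, and recognizing that $t(n)=o(\log^{1/d}n)$ is precisely the hypothesis that makes the $1/d$-scale parameter reduction of Lemma~\ref{subTransLemma} exactly cancel the $k^d$ blow-up of the kernel-search algorithm of Lemma~\ref{polyRuntimeLemma}. Everything else is assembling already-proved statements, but making these asymptotics line up — and confirming the intermediate instance size $n_1$ stays within $2^{o(k)}n^{O(1)}$ — is where the care is needed.
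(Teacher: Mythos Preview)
Your proposal is correct and follows essentially the same route as the paper: reduce $\mathrm{HwSAT}$ to $\mathrm{HwSAT}^{s(n)}$ via Lemma~\ref{subTransLemma} with $k_1=o(k^{1/d})$, then invoke the kernel-based algorithm of Lemma~\ref{polyRuntimeLemma} and Proposition~\ref{proposition_3}. The only cosmetic difference is that the paper unpacks the kernelization step explicitly before the brute-force enumeration, whereas you invoke Lemma~\ref{polyRuntimeLemma} as a black box; your more explicit calibration of $r$ and $l$ (via $t(n)=\log^{1/d}n/r(n)^2$) is exactly the device the paper describes after Lemma~\ref{subTransLemma} and summarizes in Table~\ref{transLemmaTable}.
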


\begin{proof}
We show that the assumed polynomial kernels for HwSAT$^{s(n)}$ leads to a subexponential-time algorithm for HwSAT, which collapses the W-hierarchy by Proposition \ref{proposition_3} \cite{Caisub-exponential}. We sketch the steps in the following.

First, for $t(n)=o(\log^{\frac{1}{d}} n)$, Lemma \ref{subTransLemma} holds for the first and second categories in Table \ref{transLemmaTable}, which transform the given instance $\langle C, k\rangle$ for HwSAT to an instance $\langle C_1, k_1\rangle$ for HwSAT$^{s(n)}$ in time $O(2^{o(k)} n^{O(1)})$, where $k_1 = o(k^{\frac{1}{d}})$.

Second, consider the transformed instance $\langle C_1, k_1\rangle$ in which circuit $C_1$ has at most $k_1 s(n_1)$ input variables, where $n_1=|\langle C_1, k_1\rangle|$. With the polynomial kernelization, another instance $\langle C', k'\rangle$ is produced, such that 
\begin{enumerate}
\item ${\cal P}_{\rm HwSAT}(C_1, k_1s(n_1) \Longleftrightarrow {\cal P}_{\rm HwSAT} (C', k's(n')$, where $n'=|\langle C', k'\rangle|$, 
\item $|C'| \leq k_1^d$, and $k' \leq k_1^d$.
\end{enumerate}

Third, by Lemma \ref{polyRuntimeLemma}, we can enumerate and check all possible assignment combinations to the input variables to determine if the predicate ${\cal P}_{\mathrm{HwSAT}}(C', k'$ $s(n'))$ is true and thus to determine if the predicate ${\cal P}_{\rm HwSAT}(C_1, k_1s(n_1))$ is true, and consequentially if ${\cal P}_{\rm HwSAT}(C, ks(n))$ is true, within time
\begin{eqnarray}
 2^{O(k')} |C'| + n_1^{O(1)} &\leq &2^{O(k_1^d)} k_1^d + n_1^{O(1)} \nonumber\\
 &\leq & 2^{o(k^{\frac{1}{d}})^d} (k^{\frac{1}{d}})^d + n_1^{O(1)} \nonumber\\
 &=& 2^{o(k)} k + n_1^{O(1)}\nonumber\\
 &\leq & 2^{o(k)} k + \max\{n^{O(1)}, (2^{o(k)})^{O(1)}\}\nonumber\\
 &\leq & 2^{o(k)} k + n^{O(1)}\nonumber
\end{eqnarray}
Therefore, problem HwSAT admits subexponential-time algorithms. By Proposition \ref{proposition_3}, W[P] = FPT. \qed
\end{proof}


By the proof of Theorem \ref{polyCollapseTheorem}, and  categories 1 and 2 in Table \ref{transLemmaTable}, we have

\begin{corollary}
Let $\epsilon$ be a fixed number, $0<\epsilon < 1$, and function $s(n)=\log^\epsilon n$. Then unless {\rm W[P] = FPT}, problem $\mathrm{HwSAT}^{s(n)}$ does not admit polynomial kernels of size $k^d$ for any $d <1/(1-\epsilon)$.
\end{corollary}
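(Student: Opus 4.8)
The plan is to obtain the corollary as a direct specialization of Theorem~\ref{polyCollapseTheorem}. The only real work is to rewrite the function $s(n)=\log^\epsilon n$ in the normalized form $s(n)=\log n/t(n)$ required there, and to check that the resulting $t(n)$ satisfies all the hypotheses of the theorem for every admissible exponent $d$.

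Concretely, I would fix $d$ (a positive integer, as in Theorem~\ref{polyCollapseTheorem}) with $d<1/(1-\epsilon)$ and set $t(n)=\log^{1-\epsilon} n$, so that $\log n/t(n)=\log^{\epsilon} n = s(n)$. First I would observe that $t(n)$ is non-decreasing and, since $1-\epsilon>0$, unbounded, hence $t(n)=\omega(1)$; this is exactly the standing assumption of Lemma~\ref{subTransLemma} and Theorem~\ref{polyCollapseTheorem}. Next I would verify the growth condition $t(n)=o(\log^{1/d} n)$: the inequality $d<1/(1-\epsilon)$ is equivalent to $1-\epsilon<1/d$, and therefore $\log^{1-\epsilon} n=o(\log^{1/d} n)$. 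With these two facts, Theorem~\ref{polyCollapseTheorem} applies verbatim and gives that, unless $\mathrm{W[P]=FPT}$, $\mathrm{HwSAT}^{s(n)}$ admits no polynomial kernel of size $k^d$. Since the choice of $d<1/(1-\epsilon)$ was arbitrary (and $1/(1-\epsilon)>1$, so at least $d=1$ is always admissible), the corollary follows. Internally this invokes, through the theorem, the subexponential-time transformation of Lemma~\ref{subTransLemma} in categories~1 and~2 of Table~\ref{transLemmaTable}, the kernel-based enumeration bound of Lemma~\ref{polyRuntimeLemma}, and Proposition~\ref{proposition_3}.

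The one point that needs a little care — and essentially the only place the argument could go wrong — is the strictness of the asymptotic comparison in the growth check: Theorem~\ref{polyCollapseTheorem} requires $t(n)=o(\log^{1/d} n)$, not merely $t(n)=O(\log^{1/d} n)$, so the hypothesis $d<1/(1-\epsilon)$ must be a strict inequality. At the boundary $d=1/(1-\epsilon)$ one would only get $t(n)=\Theta(\log^{1/d} n)$, and then Lemma~\ref{subTransLemma} would no longer scale the parameter down to $l(k)=o(k^{1/d})$, so the chain of estimates inside the proof of Theorem~\ref{polyCollapseTheorem} would break. Everything else is a routine substitution, and no estimate beyond those already established is needed.
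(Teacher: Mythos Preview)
Your proposal is correct and is exactly the approach the paper takes: the corollary is stated immediately after Theorem~\ref{polyCollapseTheorem} with only the remark that it follows from that theorem together with categories~1 and~2 of Table~\ref{transLemmaTable}, and your write-up simply makes explicit the substitution $t(n)=\log^{1-\epsilon} n$ and the check that $d<1/(1-\epsilon)$ gives $t(n)=o(\log^{1/d} n)$. Your observation about the necessity of the strict inequality is accurate and is implicit in the paper's use of the little-$o$ hypothesis in Theorem~\ref{polyCollapseTheorem}.
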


\begin{corollary}
Let  unbounded, non-decreasing function $s(n) = \log n / t(n)$, where $t(n)=o(\log^\epsilon n)$, for all $\epsilon >0$. Then unless {\rm W[P] = FPT}, problem $\mathrm{HwSAT}^{s(n)}$ does not admit polynomial kernels. \end{corollary}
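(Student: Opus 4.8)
The plan is to derive this corollary directly from Theorem~\ref{polyCollapseTheorem} by instantiating it at the appropriate exponent. Suppose, toward a contradiction, that $\mathrm{HwSAT}^{s(n)}$ admits a polynomial kernel, say one whose size is bounded by a polynomial $p(k)$. First I would reduce to the normalized situation of the theorem: choosing an integer $d \ge \deg(p)+1$, we get $p(k) \le k^d$ for all sufficiently large $k$, while the finitely many small instances can be decided outright (or folded into a trivial constant-size kernel), so $\mathrm{HwSAT}^{s(n)}$ in fact admits a kernel of size bounded by $k^d$ for this fixed $d$.

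Next I would check that the hypothesis ``$t(n) = o(\log^\epsilon n)$ for all $\epsilon > 0$'' supplies exactly what Theorem~\ref{polyCollapseTheorem} needs for this particular $d$. Since $d$ is now a fixed positive integer, $1/d$ is a fixed positive constant, so $t(n) = o(\log^{1/d} n)$ holds; together with the standing assumption that $t(n)$ is non-decreasing and unbounded (i.e.\ $t(n) = \omega(1)$), the function $s(n) = \log n / t(n)$ meets all premises of Theorem~\ref{polyCollapseTheorem} at exponent $d$. I then invoke that theorem: unless $\mathrm{W[P]} = \mathrm{FPT}$, $\mathrm{HwSAT}^{s(n)}$ cannot admit a polynomial kernel of size $k^d$, contradicting what was derived above. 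Hence, assuming $\mathrm{W[P]} \ne \mathrm{FPT}$, no polynomial kernel exists, which is the claim. Equivalently, one can read this off Table~\ref{transLemmaTable}: the condition on $t(n)$ puts $s(n)$ in the second category, for which the subexponential-time transformation of Lemma~\ref{subTransLemma} shrinks the parameter to $l(k) = o(k^{1/d})$ for every $d \ge 1$, and the remainder is the chain of inequalities already carried out in the proof of Theorem~\ref{polyCollapseTheorem}.

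The only genuinely delicate point is the first step---passing from an arbitrary polynomial kernel bound $p(k)$ to a clean $k^d$ bound---and even that is routine: one absorbs the leading coefficient and lower-order terms by bumping the exponent up by one, and disposes of the remaining small $k$ by brute force at an additional cost of $O(1)$. Everything else is a direct appeal to the already-established Theorem~\ref{polyCollapseTheorem} with $d$ selected \emph{after} the kernel polynomial is fixed; the ``for all $\epsilon > 0$'' quantifier in the hypothesis is precisely what guarantees this choice of $d$ is always admissible.
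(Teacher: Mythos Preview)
Your proposal is correct and matches the paper's approach: the paper simply states that this corollary follows from the proof of Theorem~\ref{polyCollapseTheorem} together with category~2 in Table~\ref{transLemmaTable}, and you have spelled out exactly that derivation, including the key observation that the universal quantifier over $\epsilon$ lets you pick $d$ after the kernel polynomial is fixed.
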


Before concluding this section, we point out that using the same techniques, hardness for polynomial kernelizability can be proved for some other classes of problems in FPT. In particular, according to the work of \cite{Caisub-exponential}, half-weighted satisfiability problems were also defined for normalized boolean circuits of constant depths. In the same spirit of HwSAT, for every $t\geq 1$, problem HwSAT$[i]$ has been defined to determines if a given depth-$i$ normalized boolean circuit of $k$ input variables is satisfiable by some half-weighted assignment. So we are able to draw the following conclusions. However, due to the page limitation, we omit their proofs. 

\begin{theorem}
Let $d\geq 1$ be any given integer and $s(n) = \log n / t(n)$ be a function for some non-decreasing unbounded function $t(n)=\omega(1)$ and $t(n) = o(\log^{\frac{1}{d}} n)$. Then for every $i\geq 1$, unless $\mathrm{W[}i\mathrm{]=FPT}$, problem $\mathrm{HwSAT}[i]^{s(n)}$ cannot admit polynomial kernels of size $k^d$.
\end{theorem}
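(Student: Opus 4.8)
The plan is to mirror the proof of Theorem \ref{polyCollapseTheorem} almost verbatim, replacing the normalized-circuit machinery for $\mathrm{HwSAT}$ with its depth-restricted analogue $\mathrm{HwSAT}[i]$. First I would invoke the depth-$i$ analogues of Lemma \ref{subTransLemma} and Lemma \ref{polyRuntimeLemma}: the subexponential-time transformation from Lemma \ref{subTransLemma} is purely a ``padding plus re-bucketing of input variables'' construction that never touches the internal gate structure, so it preserves the depth of the circuit; hence for $t(n)=o(\log^{1/d} n)$ it transforms any instance $\langle C,k\rangle$ of $\mathrm{HwSAT}[i]$ into an instance $\langle C_1,k_1\rangle$ of $\mathrm{HwSAT}[i]^{s(n)}$ in time $2^{o(k)} n^{O(1)}$ with $k_1 = o(k^{1/d})$ and the same depth $i$. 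Likewise, Lemma \ref{polyRuntimeLemma} used only that the kernel has size bounded by $k^d$ and that half-weight assignments can be enumerated on a circuit of that size; this argument is oblivious to depth, so a polynomial kernel of size $k^d$ for $\mathrm{HwSAT}[i]^{s(n)}$ yields a $2^{O(k^d)}k^d + n^{O(1)}$-time algorithm for it.

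Next I would chain the three steps exactly as in Theorem \ref{polyCollapseTheorem}: apply the transformation ($2^{o(k)} n^{O(1)}$ time, parameter down to $k_1 = o(k^{1/d})$); apply the assumed polynomial kernel to $\langle C_1, k_1\rangle$ to get $\langle C', k'\rangle$ with $|C'| \le k_1^d$ and $k' \le k_1^d$, preserving the predicate; then brute-force all half-weight assignments on $C'$. The same arithmetic as in the displayed computation of Theorem \ref{polyCollapseTheorem} collapses $2^{O(k_1^d)} k_1^d + n_1^{O(1)}$ to $2^{o(k)} k + n^{O(1)}$, since $(o(k^{1/d}))^d = o(k)$ and $n_1 \le \max\{n, 2^{o(k)}\}$. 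Thus $\mathrm{HwSAT}[i]$ admits a $2^{o(k)} n^{O(1)}$-time algorithm.

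The final step is to invoke the depth-$i$ analogue of Proposition \ref{proposition_3}, which the excerpt states was proved in \cite{Caisub-exponential}: if $\mathrm{HwSAT}[i]$ is solvable in $2^{o(k)} n^{O(1)}$ time then $\mathrm{HwSAT}[i]^{\log n} \in \mathrm{FPT}$, and since $\mathrm{HwSAT}[i]^{\log n}$ is $\mathrm{W}[i]$-hard (the depth-$i$ analogue of Proposition \ref{proposition_2}), this gives $\mathrm{W}[i] = \mathrm{FPT}$, the desired contrapositive.

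The main obstacle—really the only non-routine point—is justifying that the constant-depth structure is genuinely preserved throughout. The transformation of Lemma \ref{subTransLemma} pads with dummy gates ($C\#0^{n_0-n}$) and reorganizes which physical input lines feed which first-level gates; I would need to check that, in the depth-$i$ setting, the padding can be realized so that the appended block contributes only $O(1)$ additional levels (or none, by absorbing dummy gates into existing levels) and that rebucketing input variables does not force an extra alternation layer. Since the normalized form already fixes the output gate as an AND and alternates rigidly, the re-bucketing must be done by replicating inputs or inserting trivial AND/OR gates that can be merged with their parents, keeping the depth at exactly $i$. Once that is verified, everything else is a transcription of the $\mathrm{HwSAT}$ argument with $\mathrm{W}[\mathrm{P}]$ replaced by $\mathrm{W}[i]$, which is why the authors relegate it to the omitted proofs.
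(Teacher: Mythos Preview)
Your proposal is correct and matches the approach the paper intends: the authors explicitly state that the proofs for the $\mathrm{HwSAT}[i]$ theorems are analogous to Theorem~\ref{polyCollapseTheorem} and omit them for space, so mirroring that proof with the depth-$i$ analogues of Lemma~\ref{subTransLemma}, Lemma~\ref{polyRuntimeLemma}, and Proposition~\ref{proposition_3} is exactly right. One small over-complication: your ``main obstacle'' about depth preservation is less of an issue than you suggest, because the transformation in Lemma~\ref{subTransLemma} does not re-bucket input variables at all---$C_1$ is literally $C$ (or $C$ with a string of zeros appended to its \emph{encoding}, not gates added to the circuit), so the circuit structure and depth are preserved identically, and the claim merely verifies that the original $k$ input variables satisfy $k\le k_1 s(n_1)$.
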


\begin{corollary}
Let $\epsilon$ be a fixed number, $0<\epsilon< 1$, and function $s(n)=\log^\epsilon n$. Then for every $i\geq 1$, unless {\rm W[$i$] = FPT}, problem $\mathrm{HwSAT}[i]^{s(n)}$ does not admit polynomial kernels of size $k^d$ for any $d <1/(1-\epsilon)$.
\end{corollary}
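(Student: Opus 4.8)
The plan is to read the corollary off the preceding theorem on $\mathrm{HwSAT}[i]^{s(n)}$ (the $\mathrm{W}[i]$-level analog of Theorem \ref{polyCollapseTheorem}, stated just above) by exhibiting the function $t(n)$ that puts $s(n)=\log^{\epsilon} n$ into the required form $\log n / t(n)$. Concretely, I would take $t(n)=\log^{1-\epsilon} n$, so that $\log n / t(n)=\log^{\epsilon} n=s(n)$. Since $0<\epsilon<1$ the exponent $1-\epsilon$ is strictly positive, so $t(n)$ is non-decreasing, unbounded, and $t(n)=\omega(1)$, which is exactly what that theorem assumes of $t$.

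The only remaining hypothesis to check is the growth bound $t(n)=o(\log^{1/d} n)$, and this is precisely where the exponent range in the statement enters: $d<1/(1-\epsilon)$ is equivalent to $1-\epsilon<1/d$, so $\log^{1-\epsilon} n / \log^{1/d} n=\log^{(1-\epsilon)-1/d} n\to 0$, that is, $t(n)=o(\log^{1/d} n)$. Feeding this $t(n)$ and $d$ into the preceding theorem --- equivalently, running the argument of Theorem \ref{polyCollapseTheorem} in the depth-$i$ setting, which is insensitive to whether $d$ is an integer and is exactly the case recorded in category 1 of Table \ref{transLemmaTable} --- gives: unless $\mathrm{W}[i]=\mathrm{FPT}$, $\mathrm{HwSAT}[i]^{s(n)}$ admits no polynomial kernel of size $k^d$. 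As $d<1/(1-\epsilon)$ was arbitrary, the corollary follows.

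I do not anticipate a genuine obstacle: the corollary is bookkeeping on top of the depth-$i$ theorem, whose (omitted) proof in turn transplants the proof of Theorem \ref{polyCollapseTheorem} using the $\mathrm{W}[i]$-analogs of Lemma \ref{subTransLemma} (a $2^{o(k)}n^{O(1)}$-time transformation from $\mathrm{HwSAT}[i]$ to $\mathrm{HwSAT}[i]^{s(n)}$ that scales the parameter down to $l(k)=o(k^{1/d})$), Lemma \ref{polyRuntimeLemma} (a size-$k^d$ kernel yields a $2^{O(k^d)}k^d+n^{O(1)}$-time algorithm, which composes with the transformation into a $2^{o(k)}k+n^{O(1)}$-time algorithm for $\mathrm{HwSAT}[i]$), and the depth-$i$ version of Proposition \ref{proposition_3} (a subexponential-time algorithm for $\mathrm{HwSAT}[i]$ forces $\mathrm{W}[i]=\mathrm{FPT}$). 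If anything requires care it is only the elementary arithmetic tying $\epsilon$, the choice $t(n)=\log^{1-\epsilon} n$, and the bound $d<1/(1-\epsilon)$ together, but that is exactly the correspondence already tabulated in category 1 of Table \ref{transLemmaTable}, so no new estimate is needed.
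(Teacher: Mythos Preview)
Your proposal is correct and follows the paper's own route: the paper derives the analogous $\mathrm{W[P]}$-level corollary simply ``by the proof of Theorem \ref{polyCollapseTheorem}, and categories 1 and 2 in Table \ref{transLemmaTable}'', and the depth-$i$ version is treated identically (with proof omitted). Your identification $t(n)=\log^{1-\epsilon}n$ and the check $1-\epsilon<1/d\Leftrightarrow d<1/(1-\epsilon)$ is exactly the content of category~1 in that table, and your observation that the argument does not actually require $d$ to be an integer is apt.
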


\begin{corollary}
Let unbounded, non-decreasing function\ \ $s(n)=\log n/t(n)$, where $t(n)=o(\log^\epsilon n)$, for all $\epsilon >0$. Then for every $i\geq 1$, unless {\rm W[$i$] = FPT}, problem $\mathrm{HwSAT}[i]^{s(n)}$ does not admit polynomial kernels. \end{corollary}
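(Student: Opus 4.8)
The plan is to obtain this corollary as an immediate consequence of the preceding theorem (the HwSAT$[i]$ analog of Theorem~\ref{polyCollapseTheorem}), trading its per-degree statement for a uniform one. That theorem asserts: for every integer $d\geq 1$ and every $i\geq 1$, if $s(n)=\log n/t(n)$ with $t(n)=\omega(1)$ and $t(n)=o(\log^{1/d} n)$, then $\mathrm{HwSAT}[i]^{s(n)}$ admits no polynomial kernel of size $k^d$ unless $\mathrm{W}[i]=\mathrm{FPT}$. The hypothesis here, that $t(n)=o(\log^{\epsilon} n)$ for \emph{all} $\epsilon>0$, is precisely what makes the condition $t(n)=o(\log^{1/d} n)$ hold simultaneously for every $d\geq 1$.

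First I would fix $i\geq 1$, assume for contradiction that $\mathrm{W}[i]\neq\mathrm{FPT}$ and yet $\mathrm{HwSAT}[i]^{s(n)}$ admits a polynomial kernel. By the definition of a polynomial kernel, its size is bounded by $c\,k^{e}$ for some constants $c$ and $e\geq 1$; picking an integer $d$ with $k^{d}\geq c\,k^{e}$ for all sufficiently large $k$ (e.g.\ $d=e+\lceil c\rceil$) and deciding the finitely many remaining small instances by exhaustive search (constant, hence polynomial, time), we obtain a kernelization whose output circuit and new parameter both have size at most $k^{d}$.

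Next I would verify the function hypotheses for this $d$: since $s(n)=\log n/t(n)$ is unbounded and non-decreasing, $t(n)$ is unbounded, so $t(n)=\omega(1)$; and $t(n)=o(\log^{\epsilon}n)$ for every $\epsilon>0$ gives, on taking $\epsilon=1/d$, that $t(n)=o(\log^{1/d}n)$. Applying the $\mathrm{HwSAT}[i]$ analog of Theorem~\ref{polyCollapseTheorem} with this $d$ and the fixed $i$ then forces $\mathrm{W}[i]=\mathrm{FPT}$, a contradiction. Hence $\mathrm{HwSAT}[i]^{s(n)}$ has no polynomial kernel unless $\mathrm{W}[i]=\mathrm{FPT}$, and since $i$ was arbitrary the statement holds for all $i\geq 1$. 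I expect no genuine obstacle: the whole content is a reordering of quantifiers, and the only care needed is the routine bookkeeping of absorbing the leading constant of the assumed kernel into a slightly larger integer degree $d$ and dispatching the finitely many small instances by brute force.
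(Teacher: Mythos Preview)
Your approach matches the paper's: the corollary is stated (without proof) as a direct consequence of the HwSAT$[i]$ analog of Theorem~\ref{polyCollapseTheorem}, and your contradiction argument simply unpacks that implication, with the routine step of absorbing the kernel's leading constant into a larger integer degree~$d$. One small slip to fix: you cannot deduce $t(n)=\omega(1)$ from $s(n)$ being unbounded---constant $t$ gives $s(n)=\Theta(\log n)$, which is still unbounded; rather, $t(n)=\omega(1)$ (equivalently $s(n)=o(\log n)$) is a standing hypothesis throughout the paper so that the extended problem lies in FPT by Proposition~\ref{proposition_1}, and you should simply invoke that instead.
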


\section{Exponential Kernels Collapse the W-Hierachy}\label{subexSextion}
In this section, we extend the techniques in Section \ref{polySection} to show that exponential kernels for HwSAT$^{s(n)}$ for many functions $s(n)$ would also collapse the W-hierarchy. 

\begin{lemma}\label{expRuntimeLemma}
Let $s(n)=o(\log n)$ be an bounded, non-decreasing function. If problem $\mathrm{HwSAT}^{s(n)}$ admits exponential kernels of size bounded by $2^{O(k)}$, then it is solvable in time $2^{2^{O(k)}}2^{O(k)} + n^{O(1)}$, where $k$ is the parameter of problem $\mathrm{HwSAT}^{s(n)}$.
\end{lemma}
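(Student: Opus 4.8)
The plan is to mimic the proof of Lemma~\ref{polyRuntimeLemma} line for line, replacing the polynomial kernel-size bound $k^d$ by the exponential bound $2^{O(k)}$ and pushing that change through the brute-force step. First I would recall the shape of an instance: an input to $\mathrm{HwSAT}^{s(n)}$ is a pair $\langle C, k\rangle$ with $n=|\langle C,k\rangle|$, where $C$ is a normalized boolean circuit on at most $ks(n)$ input variables; since those variables all sit inside $C$, we have $ks(n)\le|C|\le n$, so in particular $k\le n$. Applying the assumed exponential kernelization --- a reduction running in time polynomial in $n$ and $k$, hence in $n^{O(1)}$ --- yields an equivalent instance $\langle C', k'\rangle$ of $\mathrm{HwSAT}^{s(n)}$ with $|C'|\le 2^{O(k)}$ and $k'\le 2^{O(k)}$, so that
\[
{\cal P}_{\rm HwSAT}(C, ks(n)) \Longleftrightarrow {\cal P}_{\rm HwSAT}(C', k's(n')), \qquad n'=|\langle C',k'\rangle|.
\]

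Next I would bound the cost of deciding the kernel instance by exhaustive search. The number of input variables of $C'$ is at most $|C'|\le 2^{O(k)}$, so there are at most $2^{|C'|}\le 2^{2^{O(k)}}$ assignments to those variables, and a fortiori at most that many half-weight assignments; each can be tested by evaluating the normalized circuit $C'$ in time polynomial in $|C'|$, i.e.\ $2^{O(k)}$. Hence the answer on $\langle C', k'\rangle$ --- and therefore, by the equivalence above, on $\langle C, k\rangle$ --- is obtained in time $2^{2^{O(k)}}\cdot 2^{O(k)}$. Adding the $n^{O(1)}$ time spent by the kernelization itself gives the claimed running time $2^{2^{O(k)}}2^{O(k)}+n^{O(1)}$ for $\mathrm{HwSAT}^{s(n)}$.

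I do not expect a genuine obstacle here; the delicate points are purely in the bookkeeping of the exponent towers. The term $2^{2^{O(k)}}$ is exactly the composition of ``exhaustive search over $m$ variables costs $2^m$ steps'' with the kernel-size bound $m\le 2^{O(k)}$, so the first summand carries no hidden dependence on $n$. And the additive term stays $n^{O(1)}$ --- rather than $\mathrm{poly}(k')$ with $k'\le 2^{O(k)}$ --- precisely because a kernelization is, by definition, polynomial-time in the \emph{original} input $\langle C,k\rangle$. This lemma is the faithful exponential analogue of Lemma~\ref{polyRuntimeLemma}, and I would expect it to be used exactly as Lemma~\ref{polyRuntimeLemma} was used in Theorem~\ref{polyCollapseTheorem}: compose it with the subexponential-time transformation of Lemma~\ref{subTransLemma}, instantiated with a function $s(n)$ for which the scaled-down parameter $k_1$ is small enough that $2^{2^{O(k_1)}}$ is $2^{o(k)}$ in the original parameter $k$, to obtain a subexponential-time algorithm for HwSAT and hence, by Proposition~\ref{proposition_3}, $\mathrm{W[P]=FPT}$.
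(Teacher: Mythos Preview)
Your proposal is correct and follows essentially the same argument as the paper's own proof: apply the exponential kernelization to obtain $\langle C',k'\rangle$ with $|C'|,k'\le 2^{O(k)}$, bound the number of input variables of $C'$ by $|C'|$, brute-force over all (half-weight) assignments in time $2^{2^{O(k)}}2^{O(k)}$, and add the $n^{O(1)}$ kernelization cost. Your additional bookkeeping remarks (why $k\le n$, why the additive term stays $n^{O(1)}$) and your anticipation of the lemma's use in Theorem~\ref{expoCollapseTheorem} are accurate elaborations, not deviations.
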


\begin{proof}
Similar to the proof for Lemma \ref{polyRuntimeLemma}, the 
assumed exponential kernelization reduces instance $\langle C, k\rangle$ to $\langle C', k'\rangle$, for which $|C'| \leq 2^{O(k)}$ and $k' \leq 2^{O(k)}$, such that
\[{\cal P}_{\rm HwSAT} (C, ks(n)) \Longleftrightarrow  {\cal P}_{\rm HwSAT}(C', k's(n'))\]
where $n=|\langle C, k \rangle|$ and $n' = |\langle C', k'\rangle|$. Clearly the number of input variables in $C'$ is bounded by $2^{O(k)}$; the answer to the question on instance $\langle C', k'\rangle$ can be found by enumerating all half-weight assignments to the $ 2^{O(k)}$ variables in time $2^{2^{O(k)}} 2^{O(k)}$. And the total time needed to solve problem HwSAT$^{s(n)}$ is $2^{2^{O(k)}}2^{O(k)} + n^{O(1)}$. \qed
\end{proof}

\begin{theorem}\label{expoCollapseTheorem}
Let$s(n) = \log n / t(n)$ be a function for some non-decreasing unbounded function $t(n)=\omega(1)$ and $t(n) = o(\log\log n)$. Then unless $\mathrm{W[P]=FPT}$, problem $\mathrm{HwSAT}^{s(n)}$ cannot admit exponential kernels of size $2^{O(k)}$.
\end{theorem}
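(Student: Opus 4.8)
The plan is to mirror the proof of Theorem~\ref{polyCollapseTheorem}, replacing the polynomial kernel with the assumed exponential kernel and adjusting the scale of parameter reduction accordingly. The key observation is that Lemma~\ref{expRuntimeLemma} converts an exponential kernel of size $2^{O(k)}$ into a solver for $\mathrm{HwSAT}^{s(n)}$ running in time $2^{2^{O(k)}}2^{O(k)} + n^{O(1)}$, i.e. doubly-exponential in $k$. To cancel this double exponential against a genuinely subexponential bound in the original parameter, I need the subexponential-time transformation of Lemma~\ref{subTransLemma} to compress the parameter all the way down from $k$ to $O(\log k)$. This is precisely category~3 of Table~\ref{transLemmaTable}, which applies when $t(n)=o(\log\log n)$ — exactly the hypothesis of the theorem — and yields $l(k)=O(\log k)$.

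First I would invoke Lemma~\ref{subTransLemma} with the given $s(n)=\log n/t(n)$, $t(n)=o(\log\log n)$: this transforms an arbitrary instance $\langle C,k\rangle$ of $\mathrm{HwSAT}$ into an instance $\langle C_1,k_1\rangle$ of $\mathrm{HwSAT}^{s(n)}$ in time $2^{o(k)}n^{O(1)}$, where $k_1 = l(k) = O(\log k)$, and ${\cal P}_{\rm HwSAT}(C,k) \Longleftrightarrow {\cal P}_{\rm HwSAT}(C_1, k_1 s(n_1))$. Second, I would apply the assumed exponential kernelization to $\langle C_1,k_1\rangle$, obtaining $\langle C',k'\rangle$ with $|C'|\leq 2^{O(k_1)}$ and $k'\leq 2^{O(k_1)}$, preserving the predicate. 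Third, by Lemma~\ref{expRuntimeLemma} (applied to the instance of parameter $k_1$) the answer for $\langle C',k'\rangle$ — hence for $\langle C_1,k_1\rangle$, hence for $\langle C,k\rangle$ — can be found by exhaustive enumeration over the at most $2^{O(k_1)}$ input variables of $C'$ in time $2^{2^{O(k_1)}}2^{O(k_1)} + n_1^{O(1)}$. The crucial arithmetic is that with $k_1 = O(\log k)$ we get $2^{O(k_1)} = 2^{O(\log k)} = k^{O(1)}$, so $2^{2^{O(k_1)}} = 2^{k^{O(1)}}$. Unfortunately $2^{k^{O(1)}}$ is \emph{not} subexponential in $k$ — this is the main obstacle, and it forces me to be more careful about constants.

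The resolution, which is the technical heart of the argument, is that Lemma~\ref{subTransLemma} lets me choose the reduction so that $l(k)$ lies in $o(\log k)$, not merely $O(\log k)$ — the remark after Lemma~\ref{subTransLemma} notes that for $t(n)=o(h(n))$ one can always pick $l(k)\in o(h(2^k))$, and here $h(n)=\log\log n$ gives $h(2^k)=\log k$. Hence I take $k_1 = l(k) = o(\log k)$, so that $2^{O(k_1)} = 2^{o(\log k)} = k^{o(1)}$, and therefore $2^{2^{O(k_1)}} = 2^{k^{o(1)}} = 2^{o(k)}$, which \emph{is} subexponential. Running through the chain of inequalities exactly as in Theorem~\ref{polyCollapseTheorem}, the total running time is
\begin{eqnarray}
2^{2^{O(k')}}2^{O(k')} + n_1^{O(1)} &\leq& 2^{2^{O(k_1)}} 2^{O(k_1)} + n_1^{O(1)} \nonumber\\
&\leq& 2^{2^{o(\log k)}} 2^{o(\log k)} + n_1^{O(1)} \nonumber\\
&=& 2^{o(k)} k^{o(1)} + n_1^{O(1)} \nonumber\\
&\leq& 2^{o(k)} k^{o(1)} + \max\{n^{O(1)}, (2^{o(k)})^{O(1)}\} \nonumber\\
&\leq& 2^{o(k)} + n^{O(1)}, \nonumber
\end{eqnarray}
where $n_1 \leq 2^{o(k)} n^{O(1)}$ because $\langle C_1,k_1\rangle$ is produced in subexponential time. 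Adding the $2^{o(k)}n^{O(1)}$ cost of the transformation itself leaves the bound $2^{o(k)}n^{O(1)}$. Thus $\mathrm{HwSAT}$ is solvable in subexponential time, and by Proposition~\ref{proposition_3} this forces $\mathrm{W[P]=FPT}$, contradicting the hypothesis. I would close by remarking that the only real difference from the polynomial-kernel case is that the doubly-exponential blow-up of the exponential kernel demands the much stronger parameter compression $k_1 = o(\log k)$, which is exactly why the theorem requires the tighter restriction $t(n) = o(\log\log n)$ rather than $t(n) = o(\log^{1/d} n)$.
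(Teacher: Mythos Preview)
Your proposal is correct and follows essentially the same three-step structure as the paper's proof: transform via Lemma~\ref{subTransLemma}, apply the assumed kernel, then brute-force the small instance and bound $n_1$ as in Theorem~\ref{polyCollapseTheorem}. The only noteworthy difference is how you neutralise the hidden constant $M$ in the kernel bound $2^{Mk}$: the paper first fixes $M$ and then picks $k_1=\frac{1}{M+\delta}\log k$ so that $2^{Mk_1}=k^{M/(M+\delta)}=o(k)$, whereas you push the parameter all the way down to $k_1=o(\log k)$ (which the remark after Lemma~\ref{subTransLemma} indeed permits for $t(n)=o(\log\log n)$), so that $2^{Mk_1}=k^{o(1)}$ for every constant $M$ simultaneously. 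Both choices yield $2^{2^{O(k_1)}}=2^{o(k)}$ and the proofs are otherwise identical; one small slip is that the first term of your displayed chain should start at $2^{k'}|C'|$ (or directly $2^{2^{O(k_1)}}2^{O(k_1)}$ from Lemma~\ref{expRuntimeLemma}) rather than $2^{2^{O(k')}}2^{O(k')}$.
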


\begin{proof}
We show that the assumed exponential kernels for HwSAT$^{s(n)}$ leads to a subexponential-time algorithm for HwSAT, which collapses the W-hierarchy by Proposition \ref{proposition_3}. We sketch the steps in the following.

First, without loss of generality, we assume HwSAT$^{s(n)}$ has kernel of size $2^{Mk}$, where $M\geq 0$ is a constant. Then category 3 in Table \ref{transLemmaTable} is applied to transform the given instance $\langle C, k\rangle$ for HwSAT to produce instance $\langle C_1, k_1\rangle$ for HwSAT$^{s(n)}$ in time $O(2^{o(k)} n^{O(1)})$, where $k_1 = \frac{1}{M+\delta}\log k$ for some $\delta>0$.

Second, consider the transformed instance $\langle C_1, k_1\rangle$ in which circuit $C_1$ has at most $k_1 s(n_1)$ input variables, where $n_1=|\langle C_1, k_1\rangle|$. With the exponential kernelization, another instance $\langle C', k'\rangle$ is produced, such that 
\begin{enumerate}
\item ${\cal P}_{\rm HwSAT}(C_1, k_1s(n_1) \Longleftrightarrow {\cal P}_{\rm HwSAT} (C', k's(n')$, where $n'=|\langle C', k'\rangle|$, 
\item $|C'| \leq 2^{Mk_1}$, and $k' \leq 2^{Mk_1}$.
\end{enumerate}

Third, by Lemma \ref{polyRuntimeLemma}, we can enumerate and check all possible assignment combinations to the input variables to determine if the predicate ${\cal P}_{\rm HwSAT}(C', k'$ $s(n'))$ is true and thus to determine if the predicate ${\cal P}_{\rm HwSAT}(C_1, k_1s(n_1))$ is true, and consequentially if ${\cal P}_{\rm HwSAT}(C, ks(n))$ is true, within time
\begin{eqnarray}
 2^{k'} |C'| + n_1^{O(1)}  &\leq &  2^{2^{Mk_1}} 2^{Mk_1} + n_1^{O(1)} \nonumber\\
 &\leq &  2^{2^{M\cdot\frac{1}{M+\delta}\log k}} 2^{M\cdot\frac{1}{M+\delta}\log k} + n_1^{O(1)} \nonumber\\
 &=& 2^{k^{\frac{M}{M+\delta}}} k^{\frac{M}{M+\epsilon}} + n_1^{O(1)}\nonumber\\
 &\leq & 2^{k^{\frac{M}{M+\delta}}} k + \max\{n^{O(1)}, (2^{o(k)})^{O(1)}\}\nonumber\\
 &\leq & 2^{k^{\frac{M}{M+\delta}}} k + n^{O(1)}\nonumber
\end{eqnarray}
Therefore, problem HwSAT admits subexponential-time algorithms. By Proposition \ref{proposition_3}, W[P] = FPT. \qed
\end{proof}

\section{Further Consequences of the Transformation Lemma}\label{furtherResultsSection}

In this section, we show that Lemma \ref{subTransLemma} allows further revelations of the rich parameterized complexities for within the class FPT. Through a much simpler proof for the previously known conclusion \cite{Abrahamson1995,Chen2006lowerbound} that {\rm W[P]=FPT} implies the failure of the Exponential Time Hypothesis (ETH) \cite{Impagliazzo2001ETH,Impagliazzo2001kSAT}, our work suggests that the difference between the two hypotheses is essentially the capability difference of problems SAT and 3SAT to admit $2^{o(k)}n^{O(1)}$-time algorithms, where $k$ is the number of boolean variables in the instances of the satisfiability problems.

\begin{theorem}\label{HwSAT_admits_subexp_theorem}
{\rm W[P]=FPT} implies {\rm HwSAT} admits subexponential-time algorithms.
\end{theorem}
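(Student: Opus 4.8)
The plan is to reverse the implication direction we have been using: instead of deriving a collapse from kernelization, we derive a subexponential algorithm for HwSAT from the assumption $\mathrm{W[P]=FPT}$. First I would observe that, by Proposition \ref{proposition_2}, the extended problem $\mathrm{HwSAT}^{\log n}$ is $\mathrm{W[P]}$-hard; hence the hypothesis $\mathrm{W[P]=FPT}$ places $\mathrm{HwSAT}^{\log n}$ in FPT, so there is an algorithm deciding instances $\langle C_1, k_1\rangle$ of $\mathrm{HwSAT}^{\log n}$ in time $g(k_1)\, n_1^{O(1)}$ for some recursive function $g$, where $n_1 = |\langle C_1, k_1\rangle|$.

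Next I would invoke the transformation of Lemma \ref{subTransLemma} to feed HwSAT instances into this FPT algorithm with a drastically scaled-down parameter. Given an instance $\langle C, k\rangle$ of HwSAT with $2^k \geq n$, choosing the padding size $n_0 = 2^{k/r(k)}$ for a suitable slowly growing $r(k)=\omega(1)$ produces in time $2^{k/r(k)} n^{O(1)} = 2^{o(k)} n^{O(1)}$ an equivalent instance $\langle C_1, k_1\rangle$ of $\mathrm{HwSAT}^{\log n}$ with $k_1$ as small as roughly $\log k$ (more precisely, $k_1 = l(k)$ where $l(k)$ can be taken in $O(\log k)$, exactly as in category 3 of Table \ref{transLemmaTable} but now with $s(n)=\log n$ so $t(n)=1$ and no $t(2^k)$ blow-up; one simply needs $l(k) s(n_1) \geq k$, i.e. $l(k)\log n_1 \geq k$, which holds with $n_1 \geq 2^{k/r(k)}$ and $l(k) = r(k)$, so $l(k)$ can even be chosen $\omega(1)$ but as slow as desired). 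Running the FPT algorithm on $\langle C_1, k_1\rangle$ then costs $g(k_1)\, n_1^{O(1)} = g(O(\log k))\, \big(2^{o(k)} n^{O(1)}\big)^{O(1)}$. Since $g$ is recursive, $g(O(\log k)) = 2^{o(k)}$ — indeed $g(\log k)$ is subexponential in $k$ for any fixed recursive $g$ — so the whole computation runs in $2^{o(k)} n^{O(1)}$, which is the desired subexponential-time algorithm for HwSAT.

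The step I expect to be the main obstacle, and the one deserving care in the writeup, is the bound $g(k_1) = 2^{o(k)}$: one must argue that for an arbitrary recursive $g$ there is a choice of the slowly growing function $r(k)$ (equivalently $l(k) = k_1$) making $k_1$ grow slowly enough that $g(k_1)$ stays subexponential in $k$. This is the standard "diagonalize against the unknown recursive function" trick — since $g$ is a fixed recursive function, $g^{-1}$ is well-defined and nondecreasing up to a recursive bound, so setting $k_1 = l(k)$ to grow like, say, the inverse of $k \mapsto g(k)$ (or simply noting one is free to pick $l(k)$ to be any $\omega(1)$ function and $g(l(k))/2^{\epsilon k} \to 0$ for every $\epsilon>0$ once $l(k)$ is sufficiently slow) yields $g(k_1) = 2^{o(k)}$. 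I would also need to double-check that shrinking $k_1$ this aggressively is compatible with the constraint from Lemma \ref{subTransLemma} that the padded size $n_1 \geq 2^{k/r(k)}$ remains genuinely subexponential in $k$, which it is as long as $r(k)=\omega(1)$; and that the logic-preservation clause of the transformation gives ${\cal P}_{\rm HwSAT}(C,k) \Longleftrightarrow {\cal P}_{\rm HwSAT}(C_1, k_1 \log n_1)$, so that the FPT algorithm's answer on $\langle C_1,k_1\rangle$ is exactly the answer for $\langle C,k\rangle$. Assembling these pieces gives the theorem.
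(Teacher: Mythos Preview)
Your approach is essentially the paper's: both start from $\mathrm{HwSAT}^{\log n}\in\mathrm{FPT}$ (say in time $f(k)n^{O(1)}$), use the padding transformation of Lemma~\ref{subTransLemma} to convert an HwSAT instance $\langle C,k\rangle$ into an extended instance with parameter $k_1$ growing like $f^{-1}(k)$ so that $f(k_1)=2^{o(k)}$, and conclude subexponential time. The paper routes this through an auxiliary $\mathrm{HwSAT}^{s(n)}$ with $t(n)=o(f^{-1}(\log n))$ plus an explicit kernelization-and-brute-force step, while you transform directly to $\mathrm{HwSAT}^{\log n}$ (extending Lemma~\ref{subTransLemma} to the boundary case $t(n)\equiv 1$) and simply run the FPT algorithm; your route is a bit more streamlined but the substance is identical.

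One correction: the sentence ``since $g$ is recursive, $g(O(\log k)) = 2^{o(k)}$ --- indeed $g(\log k)$ is subexponential in $k$ for any fixed recursive $g$'' is false (take $g(m)=2^{2^m}$, giving $g(\log k)=2^k$). You immediately repair this in the next paragraph by choosing $k_1$ to grow like $g^{-1}(k)$ rather than $\log k$, and that is exactly the paper's choice as well; just drop the incorrect intermediate claim from your writeup and lead with the dependence of $l(k)$ on $g$.
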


\begin{proof}
Assume W[P]=FPT and $\mathrm{HwSAT}^{\log n}$ is solvable in $f(k)n^{O(1)}$, where $f(k)=\omega(2^k)$ and its inverse funciton is computable in polynomial time. Let $s(n)=\log n/t(n)$ with $t(n)=o(f^{-1}(\log n))$. By Lemma \ref{subTransLemma}, we choose $k_1=o(t(2^k))=o(f^{-1}(k))$ in the transformation from the given instance $\langle C, k \rangle$ of HwSAT to an instance $\langle C_1, k_1 \rangle$ of HwSAT$^{s(n)}$. 

Since HwSAT$^{\log n}\in \mathrm{FPT}$, it has kernel size $f(k_1)$. Then HwSAT$^{s(n)}$ has kernel size at most $f(k_1)$. We apply the kernelization to produce another instance $\langle C', k' \rangle$. Similar to Theorem \ref{expoCollapseTheorem}, by enumerating and checking all possible assignment combinations to the input variables of $\langle C', k' \rangle$, problem HwSAT can be solved in: 
\begin{eqnarray}
 2^{k'} |C'| + n_1^{O(1)}  &\leq &  2^{f(k_1)} f(k_1) + n_1^{O(1)} \nonumber\\
 &\leq &  2^{f(o(f^{-1}(k))} f(f^{-1}(k)) + n_1^{O(1)} \nonumber\\
 &\leq & 2^{o(k)} k + n^{O(1)}\nonumber
\end{eqnarray}
Therefore, problem HwSAT admits subexponential time algorithms.
\qed
\end{proof}

%

It has been shown that the SAT problem, when the number of variables is the designated parameter, can be turned into HwSAT \cite{Caisub-exponential}.
Thus by Theorem\ref{HwSAT_admits_subexp_theorem} and Proposition~\ref{proposition_3}, we conclude


\begin{corollary}
{\rm W[P]=FPT} implies {\rm 3SAT} admits 
subexponential-time algorithms {\rm (}i.e., the failure of {\rm ETH}{\rm )}.
\end{corollary}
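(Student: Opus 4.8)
The plan is to chain together the transformation from Theorem~\ref{HwSAT_admits_subexp_theorem} with the known reduction from parameterized SAT to HwSAT, and then invoke ETH via the sparsification/standard equivalence between SAT and 3SAT. First I would recall that, as established in \cite{Caisub-exponential}, the satisfiability problem SAT parameterized by the number $k$ of boolean variables can be transformed (in polynomial time, with only a polynomial blow-up in $k$) into an instance of HwSAT: given a CNF formula on $k$ variables, one builds a normalized boolean circuit on $O(k)$ input variables that has a half-weight satisfying assignment iff the original formula is satisfiable (the usual trick is to introduce a complementary copy of each variable and wire in equality constraints so that exactly half the inputs are TRUE in any valid assignment). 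Hence a $2^{o(k)} n^{O(1)}$-time algorithm for HwSAT yields a $2^{o(k)} n^{O(1)}$-time algorithm for SAT parameterized by the number of variables.

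Second, assuming $\mathrm{W[P]=FPT}$, Theorem~\ref{HwSAT_admits_subexp_theorem} gives exactly such a subexponential-time algorithm for HwSAT, so by the previous paragraph SAT is solvable in time $2^{o(k)} n^{O(1)}$ where $k$ is the number of variables. Third, I would transfer this to 3SAT: every 3-CNF formula on $k$ variables is in particular a CNF formula on $k$ variables, so the $2^{o(k)} n^{O(1)}$ algorithm for SAT applies directly to 3SAT. Since $2^{o(k)} n^{O(1)}$-solvability of 3SAT (with $k$ the number of variables) is precisely the negation of the Exponential Time Hypothesis as formulated in \cite{Impagliazzo2001ETH,Impagliazzo2001kSAT}, we conclude that ETH fails.

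The only subtlety worth flagging is the direction of the SAT/3SAT relationship: ETH is traditionally stated for 3SAT, and a subexponential algorithm for 3SAT does \emph{not} obviously give one for general SAT, but here we are going the easy direction --- we obtain a subexponential algorithm for \emph{general} SAT (indeed for the more expressive circuit problem HwSAT) and then restrict attention to 3-CNF inputs, which is immediate. So no sparsification lemma is needed for this corollary; the main content is entirely contained in Theorem~\ref{HwSAT_admits_subexp_theorem} and the already-cited reduction of \cite{Caisub-exponential}. If one wanted to make the parameter bookkeeping fully explicit, the one routine check is that the variable-count parameter is preserved up to a constant factor through the SAT-to-HwSAT reduction, so that the $o(k)$ in the exponent for HwSAT translates to an $o(k)$ in the exponent for 3SAT; this is straightforward and I would not grind through it.
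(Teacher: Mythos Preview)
Your proposal is correct and follows essentially the same route as the paper: the paper's argument is a one-line invocation of the SAT-to-HwSAT reduction from \cite{Caisub-exponential} together with Theorem~\ref{HwSAT_admits_subexp_theorem}, and you have simply spelled out those two steps (plus the trivial restriction from SAT to 3-CNF inputs) in more detail. Your observation that no sparsification is needed, because the implication runs from general SAT down to 3SAT rather than the reverse, is exactly right and matches the paper's implicit reasoning.
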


\begin{corollary}
{\rm W[P]=FPT} if and only if {\rm HwSAT} admits subexponential-time algorithms.
\end{corollary}


In the following theorem, we genelize the results about the nonexistence of polynomial kernels and exponential kernels of $\mathrm{HwSAT}^{s(n)}$ to any polynomial time invertible function $g(k)$.

\begin{theorem}\label{general}
Unless {\rm W[P]=FPT}, for any given function $g(k)$ whose inverse function is computable in polynomial time, there exists infinite functions $h(k)$, such that problem $\mathrm{HwSAT}^{s(n)}$, for $s(n)=\frac{\log n}{(g^{-1}(\log n))/h(n)}$,
does not admit kernel of size less than or equal to $g(k)$.
\end{theorem}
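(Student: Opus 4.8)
\textbf{Proof proposal for Theorem~\ref{general}.}
The plan is to follow exactly the template established by Theorems~\ref{polyCollapseTheorem} and \ref{expoCollapseTheorem}, but with the kernel-size bound $g(k)$ taking the place of $k^d$ and $2^{O(k)}$. First I would fix $g$ and set $h(k)$ to be any unbounded nondecreasing function with $h(k)=\omega(1)$ (this is the ``infinite family'' of functions advertised in the statement); then define $t(n)=(g^{-1}(\log n))/h(n)$ and $s(n)=\log n/t(n)$, which is exactly the function in the statement. For this to make sense I need $g^{-1}$ to be unbounded and nondecreasing, so that $t(n)=\omega(1)$, ensuring $s(n)=o(\log n)$; this is where the hypothesis that $g^{-1}$ is polynomial-time computable (and implicitly sufficiently well-behaved) is used. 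The goal is then to show that a kernel of size $g(k)$ for $\mathrm{HwSAT}^{s(n)}$ yields a $2^{o(k)}n^{O(1)}$-time algorithm for $\mathrm{HwSAT}$, which by Proposition~\ref{proposition_3} forces $\mathrm{W[P]=FPT}$.

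The key steps, in order, are: (1)~Apply Lemma~\ref{subTransLemma} with this $t$ to transform a given $\mathrm{HwSAT}$ instance $\langle C,k\rangle$ into a $\mathrm{HwSAT}^{s(n)}$ instance $\langle C_1,k_1\rangle$ in time $2^{o(k)}n^{O(1)}$, where $k_1=l(k)$ can be chosen in the class $o(h(2^k)\cdot t(2^k)) = o(g^{-1}(k))$ (using the remark after Lemma~\ref{subTransLemma} that for $t(n)=h(n)/r(n)^2$ one gets $l(k)\in o(h(2^k))$, composed with the observation that $t(2^k)\approx g^{-1}(k)/h(2^k)$). (2)~Apply the assumed kernelization of size $g(k)$ to $\langle C_1,k_1\rangle$, obtaining $\langle C',k'\rangle$ with $|C'|\le g(k_1)$ and $k'\le g(k_1)$, preserving the predicate. (3)~Brute-force over all half-weight assignments to the at most $g(k_1)$ input variables of $C'$, in time $2^{g(k_1)}g(k_1)$, exactly as in Lemma~\ref{polyRuntimeLemma} / Lemma~\ref{expRuntimeLemma}. (4)~Bound the total running time: since $k_1=o(g^{-1}(k))$ and $g$ is nondecreasing, $g(k_1)\le g(o(g^{-1}(k)))=o(k)$, so the enumeration costs $2^{o(k)}\cdot o(k)$; adding the transformation and kernelization time and using $n_1\le 2^{O(k)}$ (from the $2^k\ge n$ regime of Lemma~\ref{subTransLemma}, the other regime being solved directly by exhaustive search in polynomial time) gives a total of $2^{o(k)}k+n^{O(1)}$. (5)~Invoke Proposition~\ref{proposition_3} to conclude $\mathrm{W[P]=FPT}$, contradicting the hypothesis.

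The main obstacle is step~(1)/(4): making precise the claim that $k_1$ can be pushed down to $o(g^{-1}(k))$ while still $k_1=l(k)\ge r(k)t(2^k)$ as Lemma~\ref{subTransLemma} requires, and then verifying that $g(k_1)=o(k)$ for \emph{this} choice of $k_1$. Concretely, one must check the composition $g\bigl(o(g^{-1}(k))\bigr)=o(k)$; this is immediate when $g$ is, say, a polynomial or a simple exponential (matching the earlier theorems), but for a general polynomial-time-invertible $g$ it needs a monotonicity/growth-rate argument, and is the reason the theorem only asserts the existence of \emph{some} suitable family $h$ rather than pinning down $s(n)$ uniquely — the slack in choosing $h=\omega(1)$ slowly enough is precisely what absorbs the behavior of $g$. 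I would handle this by noting that for any $g^{-1}$ that is unbounded nondecreasing, the function $l(k)=\lfloor g^{-1}(k)/h'(k)\rfloor$ for a sufficiently slowly growing $h'=\omega(1)$ simultaneously satisfies $l(k)\ge r(k)t(2^k)$ (choosing $r,h$ compatibly, as in the post-lemma remark) and $g(l(k))=o(k)$, since $g(l(k))\le g(g^{-1}(k)/h'(k))$ and $g$ being the inverse of an unbounded nondecreasing function grows slower than linearly along this scaled-down argument. Everything else is a routine transcription of the proof of Theorem~\ref{polyCollapseTheorem}.
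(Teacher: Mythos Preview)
Your proposal is correct and follows exactly the paper's approach: the paper's proof is essentially a two-line reduction to the technique of Theorem~\ref{HwSAT_admits_subexp_theorem}, observing that one only needs $k_1=o(g^{-1}(k))$ and that this is achievable via Lemma~\ref{subTransLemma} precisely when $h$ satisfies both $h(2^k)=\omega(1)$ and $h(2^k)=o(g^{-1}(k))$ (the latter to guarantee $t(n)=\omega(1)$ so the lemma applies). Your ``main obstacle'' paragraph already isolates this two-sided constraint on $h$, which is the only content the paper adds beyond invoking the earlier machinery; your initial phrasing ``any unbounded $h$'' should simply be tightened to this.
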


\begin{proof}
According to Lemma \ref{subTransLemma} and by similar technique in Theorem \ref{HwSAT_admits_subexp_theorem}, we only need to make sure $k_1=o(g^{-1}(k))$. By Lemma \ref{subTransLemma}, $t(k)=\frac{g^{-1}(\log k)}{h(k)}$ and $k_1=l(k)\geq r(k)t(2^k)$, where $r(k)=\omega(1)$. So we have $t(2^k)=\frac{g^{-1}(k)}{h(2^k)}$. 
Thus any function $h(k)$ satisfying $h(2^k)=o(g^{-1}(k))$ and $h(2^k)=\omega(1)$ meets our needs.
\qed
\end{proof}

\section{Conclusion}\label{discussionConcluSection}
We have proved that many parameterized problems in FPT do not admit polynomial kernels (or even exponential kernels) unless the W-hierarchy collapses. To the best of our knowledge, this is the first such result connecting difficulty for polynomial kernelization to parameterized intractability. The techniques we developed are novel, which establish a close relationship between polynomial (and exponential) kernelizability and the existence of sub-exponential algorithms for a spectrum of circuit satisfiability problems in FPT.

According to Theorem~\ref{general}, for every suitable function $s(n)$, unless W[P] =FPT, problem HwSAT$^{s(n)}$ cannot admit kernels of size at most $g(k)$, where $s(n)=\frac{\log n}{(g^{-1}(\log n))/h(n)}$ for any suitable $h$. It is natural to ask if the (conditional) kernel size lower bounds achieved for the problem can be further improved. We point out that the same problem has a kernel size upper bound $2^{ks(n)}$ (with the assumption $n <2^{ks(n)}$, the time used to enumerating all assignments to the circuit input variables), which is bounded by $2^{kg(k)}$. Techniques that could narrow down the exponential gap between the lower and upper bounds may also tell if polynomial kernels for HwSAT problem would lead to the collapsing of the W-hierarchy.

Another interesting issue raised is how the non-kernelizability developed by our work for HwSAT$^{s(n)}$ problems be related to kernelizability hardness results for many problems like $k$-{\sc Simple Path} and $k$-{\sc Cycle} recently achieved by others \cite{BodlaenderNoPolyKernel}. Can such ``more natural'' problems  be transformed from HwSAT$^{s(n)}$ or HwSAT$[i]^{s(n)}$ (for some $i\geq 1$) through some sort of kernel size-preserving reduction,  for some function $s(n)=o(\log n)$ and $s(n)=\omega(1)$?

\section*{Acknowledgments}
This work was supported in part by a research grant from the National Science Foundation of the U.S.A. (award No: 0916250).

\end{document}